\begin{document}

\begin{frontmatter}

\title{Practical Two-party Privacy-preserving Neural Network Based on Secret Sharing}

\author[Address1]{Zhengqiang~Ge}
\author[Address1]{Zhipeng~Zhou}
\author[Address2]{Dong~Guo}
\author[Address1]{Qiang~Li \corref{cor1}}

\cortext[cor1]{Corresponding author.\\
	E-mail address: li\_qiang@jlu.edu.cn.\\}

\address[Address1]{College of Computer Science and Technology, Jilin University, Changchun 130012, China}

\address[Address2]{Key Laboratory of Symbolic Computation and Knowledge Engineering of Ministry of Education, Jilin University, Changchun, China}

\begin{abstract}
Neural networks, with the capability to provide efficient predictive models, have been widely used in medical, financial, and other fields, bringing great convenience to our lives. However, the high accuracy of the model requires a large amount of data from multiple parties, raising public concerns about privacy. Privacy-preserving neural network based on multi-party computation is one of the current methods used to provide model training and inference under the premise of solving data privacy. In this study, we propose a new two-party privacy-preserving neural network training and inference framework in which privacy data is distributed to two non-colluding servers. We construct a preprocessing protocol for mask generation, support and realize secret sharing comparison on 2PC, and propose a new method to further reduce the communication rounds. Based on the comparison protocol, we construct building blocks such as division and exponential, and realize the process of training and inference that no longer needs to convert between different types of secret sharings and is entirely based on arithmetic secret sharing. Compared with the previous works, our work obtains higher accuracy, which is very close to that of plaintext training. While the accuracy has been improved, the runtime is reduced, considering the online phase, our work is 5$ \times $ faster than SecureML,     4.32-5.75$ \times $ faster than SecureNN, and is very close to the current optimal 3PC implementation, FALCON. For secure inference, as far as known knowledge is concerned, we should be the current optimal 2PC implementation, which is 4-358$ \times $ faster than other works.
\end{abstract}

\begin{keyword}
	secret sharing \sep two-party computation \sep neural network \sep privacy training \sep data security
\end{keyword}

\end{frontmatter}

\section{Introduction}

Neural networks have made huge breakthroughs in image recognition, speech recognition, recommendation system, and intrusion detection. The improvement of model accuracy mainly relies on a large amount of training data. Usually, these data cannot be provided by only one company or department. Hence, it is necessary to aggregate and process data from multiple parties. However, due to the public’s sensitivity to privacy and the existence of laws within and between countries, such as HIPAA, PCI and GDPR, direct aggregation and processing of data are not allowed. At the commercial level, due to the competitive relationship between companies, the disclosure of data is not only a matter of privacy and security but also involves disputes over commercial interests.

For example, an Internet company in China has a huge amount of basic user information and user portraits, such as online browsing behaviors, and another insurance company has a large number of insurance users. As such, there is room for cooperation between the two companies. The insurance company can use the user behaviors of the Internet company to make many commercial judgments, such as granting insurance or not. However, due to the conflicts of interest between the two companies, the users of the insurance company themselves have commercial value that cannot be shared with the Internet company. The idea is the same for the Internet company. Under this premise, the subsequent data aggregation and cooperation of co-training are impossible. Due to the existence of these difficulties, the technical advantages of neural networks are prevented from being prominently displayed in such specific practices.

Privacy-preserving neural network based on multi-party computation (MPC) is currently one of the main methods to solve this problem, which involves Secret Sharing\cite{shamir1979share}, Garbled Circuit\cite{yao1982protocols}, Oblivious Transfer\cite{peikert2008framework,asharov2013more}, Homomorphic Encryption\cite{bendlin2011semi,damgaard2012multiparty,paillier1999public,lindner2011better,aono2017privacy}, and other cryptographic knowledge. SecureML\cite{mohassel2017secureml}, ABY$^3$\cite{mohassel2018aby3}, SecureNN\cite{wagh2018securenn}, and FALCON\cite{wagh2020falcon} have all performed considerable research in this area and made outstanding contributions to the advancement of this technology. The framework construction of this method involves many situations, such as 2PC (two-party computation)\cite{mohassel2017secureml}, 3PC (three-parties computation)\cite{mohassel2018aby3,wagh2018securenn,wagh2020falcon,araki2016high,furukawa2017high}, and even Multi-PC (multiple-parties computation) \cite{rachuri2019trident}. Although Multi-PC has improved computation or communication efficiency, there are still problems. The existence of multiple parties makes it more difficult to ensure that all participants do not collude with each other, and there are difficulties in the actual deployment. The current mainstream 3PC or Multi-PC security model must guarantee an honest majority, but this requirement cannot be easily satisfied in the real situation. Even if it is deployed in cloud servers belonging to different entities, the collusion of parties cannot be easily controlled. However, 2PC can meet this condition. In the above example, the servers can be separately deployed in the two companies. Because they have a naturally hostile relationship, collusion with each other is contrary to their respective interests, so there is no motive for collusion. Accordingly, 2PC can also be extended to Multi-PC, as long as each party divides the data and sends them to the two servers. However, for 3PC or Multi-PC, this kind of naturally hostile security guarantee does not exist. Therefore, we believe that the 2PC-based privacy-preserving neural network is still the mainstream in this direction.

Currently, the problems faced by this method mainly include the following points: 1) The time efficiency of training models, 2) Lack of model accuracy due to data expression and the approximation of some functions that are unfriendly to MPC, and 3) Security model that the framework can support, which is mainly defined by the Universal Composition Framework\cite{kushilevitz2010information,canetti2000security,canetti2001universally}. The security model here includes the semi-honest and malicious adversary models (The semi-honest model requires the computing participants to strictly follow the protocol and only be curious about the data in the computation process, whereas the malicious adversary model does not restrict the participants to follow the protocol, they can deviate from the protocol and cause computation failure). The currently known two-party privacy-preserving neural network, like SecureML, only supports the semi-honest model, so our framework also follows it.

The time efficiency problem is currently the most important factor hindering the large-scale application of this method. The computation time of the current privacy preserving neural network is still far from the time required for plaintext training, because it includes that this method requires more time-consuming computation under the expression of MPC, and also needs participants to communicate with one another. Moreover, the time overhead caused by the communication cannot be ignored for the whole time. In the Local Area Network (LAN) setting, because the bandwidth is large enough, the communication time can be negligible compared to the computation time. Conversely, in the Wide Area Network (WAN) setting, due to the reduction in bandwidth, the proportion of communication time in the whole time will significantly increase. Hence, reducing the communication overhead can reduce the communication time in the whole process, which is also an important method to reduce the overall time. 

In order to reduce the overall time and communication overhead, of course, we must start with the basic computation. From the micro perspective, neural network includes linear computation and nonlinear computation. The linear computation mainly consists in the matrix multiplication of the fully connected layer and the convolutional layer, that is, addition and multiplication. The nonlinear computation consists in maxpool layers and nonlinear activation functions, such as rectified linear unit (ReLU), Sigmoid, and Softmax. The basic nonlinear computation involved includes comparison, division, and exponential. Generally, Secret Sharing is more efficient for linear computation, among which addition is free, multiplication can be computed by the pre-computed multiplication triplets, but it is less efficient for nonlinear computation or even cannot be computed, whereas Garbled Circuit can express these functions. However, for linear computation, the circuit has high depth and low efficiency. At the same time, Garbled Circuit needs to introduce a large amount of communication, and the garbler needs to express the entire computation as a circuit and send the truth table to the decoder. Consequently, the huge amount of communication will also reduce the efficiency. Regarding the advantages of the two cryptographic methods, the main research direction of the current study is to mix the two methods: the linear part is computed by Secret Sharing, and the nonlinear part is computed by Garbled Circuit. However,    the conversion of the two types of secret also brings considerable overheads. ABY\cite{demmler2015aby}, SecureML and ABY$^3$ have optimized a lot in this regard. For emphasis, in the computation of the neural network, the matrix multiplication still accounts for more of the whole process, so the most important thing is to express more nonlinear computation as a way to compute using Secret Sharing. While reducing the number of calling for Garbled Circuit, the overhead of secret conversion can also be further reduced.

Regarding the lack of precision, because some functions are not easy to express in Garbled Circuit or the time efficiency caused by the expression is particularly poor, it needs to be handled by approximate polynomial functions. The computation of polynomials only includes addition and multiplication, so Secret Sharing can be competent, but there is still a gap between the approximate function and original function, resulting in a lack of accuracy in the results. Ultimately, a compromise is required between the model accuracy and time efficiency. Furthermore, because neural network requires high-precision float-point numbers but Secret Sharing is performed in the fixed-point number domain, it is necessary to convert float-point numbers into fixed-point numbers. In order to ensure the precision, we need transform the number by letting ${x'} = {2^{{l_D}}}x$, making the least most ${l_D}$ bits of the fixed-point number represent the fractional part of the float-point number. For the further analysis on the lack of precision, please refer to SecureML\cite{mohassel2017secureml}. Because enough decimals can be retained, the lack of precision in this part can actually be ignored.

Based on the current difficulties mentioned above, we design a 2PC-based faster and modular privacy-preserving neural network framework based on secret sharing. The main contributions of this study are as follows:
\begin{itemize}
  \item [1)] 
The pre-processing protocol supporting 2PC secret sharing comparison. The secret sharing comparison protocol has been implemented in SecureNN, but because the generation of its mask r requires a neutral third party, it is built on 3PC. In order to support the application of this method to 2PC, we propose a new preprocessing protocol, use the OT protocol to generate the mask ${\left\{ {{{\left\langle {r[j]} \right\rangle }^p}} \right\}}$ over  $Z_p$, get $r_0$ and $r_1$, and use the garbled circuit to compute $wrap\left( {{r_0},{r_1},L} \right)$, which makes the protocol applicable on 2PC.
  \item [2)]
Secret sharing comparison on 2PC. We use the ideas in FALCON and based on the pre-processing protocol to construct the secret sharing comparison protocol on 2PC, but its idea of string multiplication when checking 0 is still not optimal. We propose a new method to reduce the number of communication rounds from the original $\log _2^l$ (where $l$ is the data length, $l=64$ in this study) to 3 rounds, further reducing the overall online computation time. At the same time, reducing the calls for secret sharing multiplication can also reduce the number of pre-computed multiplication triplets.
  \item [3)]
Other secret sharing nonlinear protocols. Based on the secret sharing comparison protocol on 2PC, we implement the secret sharing division. At the same time, we use a piecewise linear function approximation to achieve $e^x$, and based on the two functions, we realize the Softmax function entirely based on Secret Sharing, which is more practical in multi-classification tasks. The experiment shows that as long as the function is sufficiently detailed in the specified domain of definition, the same training effect as the real function can be achieved.
  \item [4)]	
We give the specific implementation of the framework and conduct experiments of neural network training and inference on the same four network structures as the previous works. Our experiment results show that our framework has achieved higher accuracy. The training accuracy of every network structure is higher than the current optimal result, which is roughly the same as the result of plaintext training, especially for the SecureML network. The results obtained by the previous work only reaches 93.4\%, but we get 94.8\% close to the plaintext training. In terms of training time, considering the online phase, in the LAN setting, our work is $5\times$ faster than SecureML, $4.32-5.75\times$ faster than SecureNN, and is very close to FALCON, and in the WAN setting, we get the approximate results. For inference time, in terms of known knowledge, we should be the current optimal 2PC implementation, which is $4-358\times$ faster than others.
\end{itemize}

\subsection {Paper Structure}
\label{sec:2}

In \textbf{Section 2}, we discuss some related work. In \textbf{Section 3}, we give the preliminaries about this paper. In \textbf{Section 4}, we present all the basic building blocks that support neural network training and inference, In \textbf{Section 5}, we give the security analysis of the corresponding algorithm, and in \textbf{Section 6} we summarize the use of all the building blocks. In \textbf{Section 7}, we provide detailed experimental data and corresponding performance evaluation. Finally, in \textbf{Appendix}, we give the ideal functionality descriptions.

\section {Related work}
\label{sec:3}

The related research on privacy-preserving neural network based on MPC mainly includes training and inference. The training part requires to ensure the privacy of training data and privacy of the result model, and the inference part requires to ensure that the data holder cannot explore the knowledge of the model and the model holder cannot have any understanding of the data required to be predicted. Because the training part includes forward propagation and back propagation and the inference part only includes forward part, the content is relatively simple and the related work is also relatively extensive\cite{riazi2018chameleon,rouhani2018deepsecure,juvekar2018gazelle,li2020optimizing,chandran2017ezpc,hesamifard2017cryptodl,ball2019garbled,ma2019non,chabanne2017privacy,gilad2016cryptonets,hesamifard2017cryptodl,liu2017oblivious,rouhani2018deepsecure,riazi2019xonn}. CryptoNet\cite{gilad2016cryptonets} is the earliest implementation of privacy-preserving inference based on Homomorphic Encryption, using squares instead of the sigmoid function and mean pooling instead of max pooling. CryptoDL\cite{hesamifard2017cryptodl} further improves CryptoNet using low-order polynomials to approximate nonlinear functions. MiniONN\cite{liu2017oblivious} proposes the use of the single instruction, multiple data (SIMD) batch to improve the pre-computed secret sharing protocol and the method of approximating the nonlinear activation function with polynomials. Without changing the trained model, the model is converted into an oblivious form, achieving a privacy-preserving inference. Chameleon\cite{riazi2018chameleon} uses a hybrid protocol, which uses GMW\cite{goldreich2019play} for low-order nonlinear functions, Garbled Circuit for high-order nonlinear functions, Secret Sharing for linear functions to achieve inference, and proposes improvements to vector computation and a new method for multiplication triplets generation, but the selection of a neutral third party is not easy in the actual operation. In a follow-up work, DeepSecure\cite{rouhani2018deepsecure} designs the optimized realization of the components in a neural network based on Garbled Circuit. XONN\cite{riazi2019xonn} also uses Garbled Circuit, but the matrix multiplication is replaced by the free XNOR gate. Particularly, it changes the weights in the neural network from floating point numbers to $\left\{ { - 1,0,1} \right\}$. Both of the above maintain the same security guarantee, that is, the parameters of the neural network model are used as privacy, and the composition of the model, the number of layers, and the number of neurons in each layer can be open to the demander.

The training part of the neural network involves a large number of derivation operations due to the back-propagation phase and includes nonlinear functions, such as Softmax. Compared with the inference part, the training part is more difficult and has less research\cite{mohassel2017secureml,mohassel2018aby3,wagh2018securenn,agrawal2019quotient,lou2019glyph,wagh2020falcon,gupta2018distributed,zheng2020industrial,li2020npmml}. The earliest research to realize the neural network training is SecureML, which improves the generation of multiplication triplets, uses Secret Sharing for linear computation, Garbled Circuit for nonlinear computation, such as comparison, and uses ReLU to construct the approximate functions of Sigmoid and Softmax. Then, the training of neural network is realized for the first time, but the time and accuracy of the model still have much room for improvement. QUOTIENT\cite{agrawal2019quotient} tailors the weight to $\left\{ { - 1,0,1} \right\}$, uses MPC to construct the quantization and normalization in the back propagation of the neural network, and designs the optimization algorithm in the fixed-point domain based on the adaptive gradient optimization. ABY$^3$ improves the conversion among arithmetic sharing, Boolean sharing, and Yao sharing and realizes the training of neural networks on 3PC for the first time. Based on the basic framework of SecureML and adding a third-party assistant, SecureNN proposes a new comparison protocol based on Secret Sharing and builds secret sharing ReLU and Maxpool functions based on this. FALCON further constructs a 3PC neural network training method with equal status of the three parties on the basis of SecureNN and ABY$^3$ and improves the comparison protocol and linear computation for 3PC, which has greatly improved time efficiency and framework security. However, the improved comparison protocol still requires the number of multiplications proportional to the data bits, so there is still room for improvement. Moreover, the three recent papers are all research on 3PC. To the best of our knowledge, only a few studies on the neural network training of 2PC have been performed in the past two years.

\section{Preliminaries}
\label{sec:4}

\subsection{Neural Network}
\label{sec:6}

Neural network consists of several types of layer, such as the fully connected layer, convolutional layer, pooling layer, and dropout layer. Each layer contains several neurons, which contain weights and biases, and the corresponding nonlinear activation functions, such as 

\begin{equation}
 \begin{aligned}
ReLU: f\left( x \right) = \max \left( {x,0} \right)(comparison)
 \end{aligned}
 \label{equ:1}
\end{equation}

\begin{equation}
 \begin{aligned}
 Sigmoid: f\left( x \right) = {1 \over {1 + {e^{ - x}}}}(exponential, division)
 \end{aligned}
 \label{equ:2}
\end{equation}
\begin{equation}
 \begin{aligned}
Softmax:f\left( x \right) =  {{{e^{ - {u_i}}}} \over {\sum\nolimits_{i = 1}^{dm} {{e^{ - {u_i}}}} }}(comparison, exponential, division)
 \end{aligned}
 \label{equ:3}
\end{equation}
Training a neural network using the stochastic gradient descent (SGD) method can be divided into two stages: forward propagation and back propagation. The purpose of forward propagation is to compute the difference between the predicted result and the actual label. Assuming that there are ${d_i}$ neurons in the ${i^{th}}$ layer of the neural network, the process of forward propagation can be formalized as 
\begin{equation}
 \begin{aligned}
{X_i} = f\left( {{X_{i - 1}} \cdot {w_i} + {b_i}} \right)
 \end{aligned}
 \label{equ:4}
\end{equation}
where $f$ is the nonlinear activation function, ${{w_i}}$ is the weight, and ${{b_i}}$ is the bias. Then, we compute the loss function according to the result of the output layer. The loss function in the neural network is generally the cross-entropy function: 
\begin{equation}
 \begin{aligned}
{C_i}\left( w \right) =  - {y_i}\log y_i^* - (1 - {y_i})\log (1 - y_i^*)
 \end{aligned}
 \label{equ:5}
\end{equation}
where $y_i^*$ is the result of the output layer. The back propagation aims to update the weight and bias of each neuron according to the loss function, the process is mainly based on the chain rule 
\begin{equation}
 \begin{aligned}
 {Y_i} = ({Y_{i + 1}} \times W_i^T) \cdot {{\partial f\left( {{U_i}} \right)} \over {\partial {U_i}}}
 \end{aligned}
 \label{equ:6}
\end{equation}
Then, it updates the parameters according to the derivative result 
\begin{equation}
 \begin{aligned}
{w_i}: = {w_i} - {\alpha  \over {\left| B \right|}} \cdot {X_i} \times {Y_i}
 \end{aligned}
 \label{equ:7}
\end{equation}
where $\alpha $ is the learning rate and $B$ is the batch. In the neural network, the basic computations mainly involved are matrix multiplication, comparison, exponential and division. For each layer, two matrix multiplications are required, and comparison and exponential are specifically applied according to the selection of the neuron's nonlinear activation function. Therefore, the optimization of comparison and matrix multiplication is of great significance to the improvement of the time efficiency of the whole neural network.

\subsection{Secure Computation}
\subsubsection{Secret Sharing}
\label{sec:5}

In our whole protocol, all intermediate data are shared between the two computing servers in the form of arithmetic sharing, and $\left\langle  \circ  \right\rangle $ represents the secret form of a number. Assuming that the data owner holds the data $a$, in order to share $a$, it randomly generates ${a_0} \in {Z_{{2^l}}}$, computes and gets ${a_1} = a - {a_0}\bmod {2^l}$. Then, it can send  $\left\langle a \right\rangle _0$ and $\left\langle a \right\rangle _1$ to the two servers $P_0$ and $P_1$ separately for specific computation. Because $a_0$ and $a_1$ are random relative to the original $a$, the privacy of the data will not be leaked to the two actual computing servers, and secret sharing addition and multiplication will become relatively easy. For addition, suppose that $a = {\left\langle a \right\rangle _0} + {\left\langle a \right\rangle _1}$, $b = {\left\langle b \right\rangle _0} + {\left\langle b \right\rangle _1}$, and $P_0$, $P_1$ hold ${\left\langle a \right\rangle _0},{\left\langle b \right\rangle _0}$ and ${\left\langle a \right\rangle _1},{\left\langle b \right\rangle _1}$ respectively. To get $c = a + b$, $P_i$ only needs to locally compute ${c_i} = {a_i} + {b_i},i \in \{ 0,1\} $, because ${c} = {c_0} + {c_1} = {a_0} + {a_1} + {b_0} + {b_1}$, in which addition is free and does not need communication in secret sharing. For multiplication,  it needs to precompute the multiplication triplets $\left\langle x \right\rangle ,\left\langle y \right\rangle $ and $\left\langle z \right\rangle $ where $z = x \cdot y$, and the specific protocol is presented as \textbf{Algorithm \ref{algorithm1}}. To perform a secret sharing multiplication, a set of multiplication triplets is required, and the party needs to send two messages $\left\langle e \right\rangle $ and $\left\langle f \right\rangle $ to the other party, making the linear computation in the previous works require many multiplication triplets and increasing communication.

\begin{CJK*}{UTF8}{gkai}
    \begin{algorithm}
    \caption{Secret Sharing Multiplication}
    \normalsize\textbf{Input}: ${P_i}$ holds ${\left\langle a \right\rangle _i},{\left\langle b \right\rangle _i}$
    \\\textbf{Output}: ${P_i}$ gets ${\left\langle c \right\rangle _i}$ ,where ${\left\langle c \right\rangle _i} = {\left\langle a \right\rangle _i}*{\left\langle b \right\rangle _i}$
    \\\textbf{Common Randomness}: ${P_i}$ holds one pair of multiplicaiton triplets $({x_i},{y_i},{z_i})$
    \\
        \begin{algorithmic}[1] 
                \State for $i \in \{ 0,1\} $ ${P_i}$ do:
                \State \qquad ${\left\langle e \right\rangle _i} = {\left\langle a \right\rangle _i} - {\left\langle x \right\rangle _i},{\left\langle f \right\rangle _i} = {\left\langle b \right\rangle _i} - {\left\langle y \right\rangle _i}$
                \State \qquad reveal $e$ and $f$
                \State \qquad ${P_i}$ gets ${\left\langle c \right\rangle _i} = f \times {\left\langle x \right\rangle _i} + e \times {\left\langle y \right\rangle _i} + {\left\langle z \right\rangle _i} + i \times e \times f$
                \State end for
        \end{algorithmic}
      \label{algorithm1}
    \end{algorithm}
\end{CJK*}	

\subsubsection{Oblivious Transfer and Garbled Circuit}
Oblivious transfer (OT) is a fundamental cryptographic primitive that is commonly used as building blocks in MPC. In the protocol, a sender ${\rm S}$ has two inputs ${x_0}$ and ${x_1}$, and a receiver ${\rm R}$ has a selection bit $b$ and wants to obtain ${x_b}$ without learning anything else or revealing $b$ to ${\rm S}$. The notion $\left( { \bot ;{x_b}} \right) \leftarrow$ OT$\left( {{x_0},{x_1};b} \right)$ can be used to denote the protocol.

Garbled Circuit (GC) is another generic protocol in MPC that requires only a constant number of communication rounds. A garbled circuit protocol consists of a garbling algorithm that generates a garbled circuit ${\rm F}$; a decoding tab $dec$; an encoding algorithm that generates garbled input $\hat x$; an evaluation algorithm takes $\hat x$ and ${\rm F}$ as input and returns the garbled output $\hat z$; and a decoding algorithm that takes the decoding tabel $dec$ and $\hat z$ and return $f(x)$. In our protocol, OT and GC are only used as black boxs in the relevant protocols.

\subsection{Notation}
\label{sec:7}

In our protocol, we use additive secret sharing over the two rings $Z_L$ and $Z_p$, where $L = {2^l}$ and $p$ is a prime. All the numbers involved in the formal computation are additively shared in ring $Z_L$, and in this work, $l = 64$. $Z_p$ is only used in the sercet sharing comparison. To comapre two numbers $a$ and $b$, we need to get the relationship of each bit of them, and to share them  between two parties, each bit of the 64-bit secret is additively shared in $Z_p$,  here we choose $p = 67$. that is,  each 64-bit number is shared as a vector of 64 shares, and each share is a value between 0 and 66. And we denote by $x\left[ j \right]$ the ${j^{th}}$ component of a vector $x$.

The wrap function which will be mentioned many times later is defined as a function of the secret shares of the parties and effectively compute the "carry bit" when the shares are added together as integers. and is formally defined as follows:
\begin{equation}
 wrap({a_1},{a_2},L) = \left\{ {
\begin{matrix}
   0 & {
\begin{matrix}
   {if} & {{a_0} + {a_1} < L}  \cr 
\end{matrix}
 }  \cr 
   1 & {Otherwise}  \cr 
\end{matrix}
 } \right.
\end{equation}

\subsection{Security Model}
We consider the situation that two or more clients who want to train models on their joint data, but do not want to leak the data privacy. We do not make any assumption on the distribution of the data, it can be horizontally, vertically or arbitrarily partitioned among the clients. When there are only two clients, they themselves can be the servers participate in computation, and the data does not need to be shared. Because it can be assumed that the data has been shared, but one party has all the data, and the number of locations corresponding to the other party is 0. While when there are more than two clients, we can select two of them as servers, and other clients secretly share their data and send
them to the two servers separately, and the two server need to be non-colluding.

We assume that there is a semi-honest adversary ${\cal A}$ who can corrupt any subset of the clients and at most one of the two servers, this confirms that the two servers must be non-colluding again. If one of the servers is controlled by the adversary, aonther one must behave honestly. The security definition requires that the adversary only learns the data from the clients and the server it has controlled and the final output but nothing about other clients or another server. We define the security using the Universal Composition framework, and the overview is given in \textbf{Section 5}.

\section{Building Blocks}
\label{sec:8}

\subsection{Preprocessing protocol}
\label{sec:9}

In order to realize the secret sharing comparison on 2PC, we need to generate the secret sharing of the mask $r$, ${\left\langle r \right\rangle _0},{\left\langle r \right\rangle _1}$, the secret sharing of each bit of $r$ over $Z_p$, ${\left\{ {{{\left\langle {r[j]} \right\rangle }^p}} \right\}_i},j \in [0,l),i \in \{ 0,1\} $, and $wrap\left( {{r_0},{r_1},L} \right)$. In SecureNN, these tasks are all computed by a non-colluding neutral third party, and then sent to two participants for specific computation, so the masks for the two parties are completely random. However, it is introduced whether this neutral third party will collude with one of the parties. In order to prevent this from happening, we no longer introduce a neutral third party, and only through the two parties, achieve a more efficient 2PC-based secret sharing comparison protocol. For this reason, we propose a new preprocessing protocol to generate the completely random mask $r$.

For the secret sharing of the corresponding bit of $r$ over $Z_p$, we can use the OT protocol. First, for each bit, $P_0$ randomly generates $a \in (0,p)$ as the secret sharing of the bit, then generates
$\left\{ {{b_0},{b_1}} \right\} = \left\{ {p - a,p - a + 1} \right\}$, and randomly swaps. $P_1$ randomly generates choice bit $c \in \left\{ {0,1} \right\}$, and $P_0,P_1$ jointly execute the OT protocol $( \bot ;{b_c}) \leftarrow$ OT$({b_0},{b_1};c)$. $P_1$ gets $b$ over $Z_p$ corresponding to the bit. Considering the properties of the OT protocol, $P_0,P_1$ do not know if the corresponding value is 0 or 1, so the value is shared. $P_i$ gets $\{ \left\langle {r[j]} \right\rangle _i^p\} ,(i \in \{ 0,1\} )$. With the secret sharing result of the corresponding bit, multiply and add the value corresponding to the bit to get the secret sharing ${\left\langle r \right\rangle _i}$,
\begin{equation}
 \begin{aligned}
{\left\langle r \right\rangle _i} = \sum\limits_{j = 0}^l {pow(2,j)*(r[j] - p/2 - i)}
 \end{aligned}
 \label{equ:8}
\end{equation}
The specific protocol is presented as \textbf{Algorithm \ref{algorithm2}}.

\begin{CJK*}{UTF8}{gkai}
    \begin{algorithm}
    \caption{GenerateMaskR}
    \normalsize\textbf{No Input}
    \\\textbf{Output}: ${P_i}$ gets ${\left\langle r \right\rangle _i}$, and $\{ \left\langle {r[j]} \right\rangle _i^p\} ,(i \in \{ 0,1\} )$
    \\\textbf{No Randomness}
    \\
        \begin{algorithmic}[1] 
                \State for $j \in \{ 0,1,...,l-1\}$ do : 
                \State \qquad ${P_0}$ generate random ${a_j} \in (0,p)$, get ${\{ {b_0},{b_1}\} _j} = {\{ p - a,p - a + 1\} _j}$, and shuffle ${\{ {b_0},{b_1}\} _j}$
                \State \qquad ${P_1}$ generate choice bit ${c_j} \in \{ 0,1\} $
                \State end for
                \State ${P_0},{P_1}$ invoke the OT protocol $( \bot ;{b_c}) \leftarrow OT({b_0},{b_1};c)$
                \State ${P_0}$ gets $\left\{ {{{\left\langle {r[j]} \right\rangle }^p}} \right\} \leftarrow \left\{ {{a_j}} \right\}$ and ${P_1}$ gets $\left\{ {{{\left\langle {r[j]} \right\rangle }^p}} \right\} \leftarrow \left\{ {{b_j}} \right\}$
                \State ${P_0},{P_1}$ locall compute ${\left\langle r \right\rangle _i} = \sum\nolimits_{j = 0}^l {pow(2,j)*(r[j] - p/2 - i)} $ to get ${\left\langle r \right\rangle _i}$
        \end{algorithmic}
    \label{algorithm2}
    \end{algorithm}
\end{CJK*}

As regards whether ${\left\langle r \right\rangle _0} + {\left\langle r \right\rangle _1}$ wraps or not, what we can guess is that it must have a relationship with the most significant bit (MSB), MSB$\left( {{{\left\langle r \right\rangle }_0}} \right)$, MSB$\left( {{{\left\langle r \right\rangle }_1}} \right)$ of the two secret sharings and the highest bit MSB$\left( r \right)$ of $r$. If MSB$\left( {{{\left\langle r \right\rangle }_0}} \right)$,MSB$\left( {{{\left\langle r \right\rangle }_1}} \right)$ both are 0, no matter MSB$\left( r \right)$ is 0 or 1, it will not wrap. If MSB$\left( {{{\left\langle r \right\rangle }_0}} \right)$,MSB$\left( {{{\left\langle r \right\rangle }_1}} \right)$  both are 1, no matter MSB$\left( r \right)$ is 0 or 1, it will wrap. If MSB$\left( {{{\left\langle r \right\rangle }_0}} \right)$,MSB$\left( {{{\left\langle r \right\rangle }_1}} \right)$ are 0, 1, it will not wrap when MSB$\left( r \right)$ is 1, and will wrap when it is 0. MSB$\left( {{{\left\langle r \right\rangle }_0}} \right)$, MSB$\left( {{{\left\langle r \right\rangle }_1}} \right)$ both parties can take out separately, and MSB$\left( r \right)$ is held by both parties, which is $\left\langle {r[0]} \right\rangle _i^p$. We obtain the truth table \textbf{TABLE \ref{tab1}} by exploring the relationship between
 MSB$({\left\langle r \right\rangle _0})$, MSB$({\left\langle r \right\rangle _1})$, $\left\langle {r[0]} \right\rangle _0^p$, $\left\langle {r[0]} \right\rangle _1^p$, where ${m_i} =$ MSB$\left( {{{\left\langle r \right\rangle }_i}} \right)$, ${m^i}$ represents the value corresponding to $\left\langle {r[0]} \right\rangle _i^p$ over $Z_2$, and ${m^i} = \left( {\left\langle {r\left[ 0 \right]} \right\rangle _i^p + i} \right)\% 2,i \in \left\{ {0,1} \right\}$. Through the truth table, we derive the following logical expression:  
\begin{equation}
 \begin{aligned}
wrap({r_0},{r_1},L) = ({m_0}\& {m^0}\& {m^1})|({m_1}\& {m^0}\& {m^1})|({m_0}\& \sim{m^0}\& \sim{m^1})|({m_1}\& \sim{m^0}\&\sim{m^1})|({m_0}\& {m_1})
 \end{aligned}
\end{equation}
Then, we use the garbled circuit for computation, and the specific protocol is presented as \textbf{Algorithm \ref{algorithm3}}.

\begin{table*}[!t]
\centering
\caption{Truth table of $wrap({r_0},{r_1},L)$, where ${m_i}$ represents MSB$({\left\langle r \right\rangle _i})$, and ${m^i}$ represents the value corresponding to $\left\langle {r[0]} \right\rangle _i^p$ over $Z_2$.}
 \setlength{\tabcolsep}{3.8mm}{
\renewcommand{\arraystretch}{1.5}
\begin{tabular}{|c|c|c|c|c|c|c|c|c|c|c|c|c|c|c|c|c|}
\hline
${m_0}$ & 0 & 0 & 0 & 0 & 0 & 0 & 0 & 0 & 1 & 1 & 1 & 1 & 1 & 1 & 1 & 1 \\ \hline
${m_1}$ & 0 & 0 & 0 & 0 & 1 & 1 & 1 & 1 & 0 & 0 & 0 & 0 & 1 & 1 & 1 & 1 \\ \hline
${m^0}$ & 0 & 0 & 1 & 1 & 0 & 1 & 1 & 0 & 0 & 1 & 1 & 0 & 0 & 0 & 1 & 1 \\ \hline
${m^1}$ & 0 & 1 & 0 & 1 & 0 & 1 & 0 & 1 & 0 & 1 & 0 & 1 & 0 & 1 & 0 & 1 \\ \hline
$wrap$ & 0 & 0 & 0 & 0 & 1 & 1 & 0 & 0 & 1 & 1 & 0 & 0 & 1 & 1 & 1 & 1 \\ \hline
\end{tabular}}
\label{tab1}       
\end{table*}

\begin{CJK*}{UTF8}{gkai}
    \begin{algorithm}
    \caption{Get Wrapped}
    \normalsize\textbf{No Input}
    \\\textbf{Output}: ${P_i}$ gets $wrap({r_0},{r_1},L)$
    \\\textbf{No Randomness}
    \\
        \begin{algorithmic}[1] 
                \State for $i \in \{ 0,1\} $ ${P_i}$ do : 
                \State \qquad ${m_i} \leftarrow MSB({r_i})$ , ${m^i} \leftarrow (r[0]_i^p + i)\% 2$
                \State end for
                \State ${P_0},{P_1}$ invoke the Garbled Circuit Protocol and get 
                \State \qquad $wrap({r_0},{r_1},L) = $$({m_0}\& {m^0}\& {m^1})|({m_1}\& {m^0}\& {m^1})|({m_0}\& $\~{}${m^0}\& $\~{}${m^1})|({m_1}\& $\~{}${m^0}\& $\~{}${m^1})|({m_0}\& {m_1})$
        \end{algorithmic}
        \label{algorithm3}
    \end{algorithm}
\end{CJK*}

\subsection{Secret Sharing Comparison}
\label{sec:10}

As the basis of the neural network, privacy-preserving comparison has always been a hot research. The most well-known comparison is the Millionaire Problem, where two millionaires want to know who has more money but they do not want the other to know how much money they have. This is a comparison problem in which neither party wants to disclose privacy. The first solution proposed is Garbled Circuit proposed by Yao[30]. However, due to the low computational efficiency caused by the huge communication of the circuit, many researchers have also proposed other schemes for comparison in neural network. The comparison of $a$ and $b$ is actually to obtain MSB of $\left( {a - b} \right)$. When the MSB is 0, then $a > b$ and the result is opposite when the MSB is 1. To determine the MSB of the arithmetic secret sharing, SecureNN proposes to use the idea of MSB$\left( a \right) =$ LSB$\left( {2a} \right)$ over the odd ring. Here, the data are first converted to an odd ring, and then judging the least significant bit (LSB) becomes simple. FALCON further developed this idea and considered that the real MSB can be obtained from the addition of the MSB of the three arithmetic secret sharing and the carry bit of the previous value. However, these methods are only applicable to 3PC, and there is no secret sharing comparison protocol for 2PC.
We combine SecureNN and FALCON to get a secret sharing comparison protocol for 2PC. The main idea is to evaluate the MSB of the secret sharing results of the difference and convert it into the XOR result of the MSB of the two secret sharing results ${c_0},{c_1}$ and the bit representing whether the previous value is wrapped or not: 

\begin{equation}
 \begin{aligned}
\left\langle a \right\rangle  > \left\langle b \right\rangle ? = MSB\left( {\left\langle a \right\rangle  - \left\langle b \right\rangle } \right) = MSB\left( {{c_0}} \right) \oplus MSB\left( {{c_1}} \right)\oplus wrap\left( {2{c_0},2{c_1},L} \right)
 \end{aligned}
 \label{equ:10}
\end{equation}
where $\left\langle c \right\rangle  = \left\langle a \right\rangle  - \left\langle b \right\rangle $. As regards wrapping, it can use the pre-generated random number $r$ to mask the data and attribute it to the comparison of the random r and a plaintext $x$. The pre-generated secret sharing of $r$ over $Z_p$ is compared with the plaintext $x$ in order of bits. If $r > x$, then there must be a certain bit position $k$, all the digits before the bit satisfy that $r\left[ j \right] = x\left[ j \right]$, which is $r\left[ j \right] \oplus x\left[ j \right] = 0$, $j \in \{ k+1,k+2,...l-1\}$. At this bit, $r\left[ k \right] - x\left[ k \right] = 1$, $- \left( {r\left[ k \right] - x\left[ k \right]} \right) =  - 1$
\begin{equation}
\begin{aligned}
c\left[ k \right] = 1 - \left( {r\left[ k \right] - x\left[ k \right]} \right)\sum\nolimits_{j = k + 1}^{l-1} {\left( {x\left[ j \right] \oplus r\left[ j \right]} \right)}  = 0
 \end{aligned}
\label{equ:11}
\end{equation}
Convert the problem to find whether there is a 0 in $\left\{ {{{\left\langle {c\left[ j \right]} \right\rangle }^p}} \right\}$,$j \in \left\{ {0,1,...l - 1} \right\}$. Then we can get the final result. But the problem is that the position of 0 and each value in $\left\{ {{{\left\langle {c\left[ j \right]} \right\rangle }^p}} \right\}$ cannot be got by both parties. Once the parties know the position of 0, the approximate size of $x$ can be inferred. Knowing the values of $\left\{ {{{\left\langle {c\left[ j \right]} \right\rangle }^p}} \right\}$, it can determine the value of $x$ based on the change of $c\left[ j \right]$ from front to back and the $r\left[ j \right]$ it holds. SecureNN adopts the method of shuffling all  $\left\{ {{{\left\langle {c\left[ j \right]} \right\rangle }^p}} \right\}$, and multiplying with the same value to cover up, and then sending them to a neutral third party for judgment, but this is the same as the above problem, which is likely to cause collusion with one party. FALCON uses the method of string multiplication to solve this problem. This method can cover all the values of $\left\{ {{{\left\langle {c\left[ j \right]} \right\rangle }^p}} \right\}$ and only get the final result whether there is 0. But we think this method is far from optimal. Because it needs to multiply the data on each bit together, even if the parallel computation is used, it still needs to be executed in order of $\log _2^l - 1$ times of multiplication, which has $\log _2^l - 1$ communication rounds, plus one communication to reveal the plaintext, a total of $\log _2^l$ communication rounds are required. The total number of communication rounds is large. The communication time plus the mutual waiting time result in a total longer computation time. Meanwhile, on 2PC, computing multiplication requires pre-computation of multiplication triplets, which consumes a lot of offline time. Therefore, we propose a new method to check 0 with fewer communication rounds, which is presented as \textbf{Algorithm \ref{algorithm4}}.

\begin{CJK*}{UTF8}{gkai}
    \begin{algorithm}
    \caption{CheckZero}
    \normalsize\textbf{Input}: ${P_i}$ holds $\{ \left\langle {c[j]} \right\rangle _i^p\} ,i \in \left\{ {0,1} \right\},j \in \left\{ {0,1,...,l - 1} \right\}$
    \\\textbf{Output}: ${P_0}$ gets $\eta $
    \\\textbf{Common Randomness} ${P_0}$, ${P_1}$ hold the same random seed $\theta $
    \\
        \begin{algorithmic}[1] 
                \State ${P_0},{P_1}$  randomly shuffle $\{ \left\langle {c[j]} \right\rangle _i^p\} $  using seed $\theta $
                \State for $j$ $\in \left\{ {0,1,...,l/2 - 1} \right\}$, ${P_0},{P_1}$ compute ${\left\langle {d\left[ j \right]} \right\rangle ^p} = {\left\langle {c\left[ j \right]} \right\rangle ^p} \cdot {\left\langle {c\left[ {j + l/2} \right]} \right\rangle ^p}$
                \State ${P_0}$ randomly generates $N \in \left\{ {1,2,...,p - 1} \right\}$, for $j \in \left\{ {0,1,...,l/2 - 1} \right\}$ computes $\left\langle {{d^*}[j]} \right\rangle _0^p = \left\langle {d[j]} \right\rangle _0^p \cdot N\bmod p$, and sends $\left\{ {\left\langle {{d^*}[j]}  \right\rangle _0^p} \right\}$ to ${P_1}$
                \State ${P_1}$ receives $\left\{ {\left\langle {{d^*}[j]} \right\rangle _0^p} \right\}$ and for $j \in \left\{ {0,1,...,l/2 - 1} \right\}$ generates ${M_j} \in \left\{ {1,2,...,p - 1} \right\}$ computes $\left\langle {{d^{**}}[j]} \right\rangle _0^p = \left\langle {{d^*}[j]} \right\rangle _0^p \cdot {M_j}\bmod p,\left\langle {{d^*}[j]} \right\rangle _1^p = \left\langle {d[j]} \right\rangle _1^p \cdot {M_j}\bmod p$, shuffle $\left\{ {\left\langle {{d^{**}}[j]} \right\rangle _0^p} \right\},\left\{ {\left\langle {{d^*}[j]} \right\rangle _1^p} \right\}$ synchronously and randomly, and sends $\left\{ {\left\langle {{d^{**}}[j]} \right\rangle _0^p} \right\},\left\{ {\left\langle {{d^*}[j]} \right\rangle _1^p} \right\}$ to ${P_0}$
                \State ${P_0}$ receives $\left\{ {\left\langle {{d^{**}}[j]} \right\rangle _0^p} \right\}$,$\left\{ {\left\langle {{d^*}[j]} \right\rangle _1^p} \right\}$, for $j\in\left\{ {0,1,...,l/2 - 1} \right\}$ computes $\left\langle {{d^{**}}[j]} \right\rangle _1^p = \left\langle {{d^*}[j]} \right\rangle _1^p \cdot N\bmod p$ and gets $\left\{ {\left\langle {{d^{**}}[j]} \right\rangle _0^p} \right\},\left\{ {\left\langle {{d^{**}}[j]} \right\rangle _1^p} \right\}$
                \State for $j \in \left\{ {0,1,...,l/2 - 1} \right\}$ ${P_0}$ computes $d[j] = \left\langle {{d^{**}}[j]} \right\rangle _0^p + \left\langle {{d^{**}}[j]} \right\rangle _1^p\bmod p$. If it exists j, $d\left[ j \right] = 0$ then $\eta  = 1$, else $\eta  = 0$. ${P_0}$ gets $\eta $
   
        \end{algorithmic}
      \label{algorithm4}
    \end{algorithm}
\end{CJK*}

In \textbf{Algorithm \ref{algorithm4}}, the problem we need to solve is that both parties find whether there is 0 in $\left\{ {{{\left\langle {c\left[ j \right]} \right\rangle }^p}} \right\}$ without knowing the values of $\left\{ {{{\left\langle {c\left[ j \right]} \right\rangle }^p}} \right\}$ and the location of 0. If we do not want the two parties to know $\left\{ {{{\left\langle {c\left[ j \right]} \right\rangle }^p}} \right\}$, we need to mask all of them. We know that over $Z_p$, multiplying a random number (Not 0) will change its value, but the result of multiplying 0 by any number is still 0, so we can use multiplication to cover up the result. In addition, in order to prevent both parties from knowing the location of 0, we can use that one party cannot reveal the data but can shuffle them, while the other party does not know the process of shuffle, that is, does not know the location of 0, but can reveal the data (The data here are masked by multiplication) and find 0 to solve this problem. Here let $P_0$ is responsible for finding 0, and $P_1$ is responsible for shuffling and masking. Because $\left\{ {{{\left\langle {c\left[ j \right]} \right\rangle }^p}} \right\}$ are secretly shared between $P_0$ and $P_1$, in order to synchronize shuffling and masking, the part of data of $P_0$ needs to be sent to $P_1$ first. There is a problem here. If they are sent directly, the data can be revealed, which does not meet our expectations. Then $P_0$ also needs to mask its own data. If $P_0$ multiplies each ${c\left[ j \right]}$ by different random numbers, when the other party shuffles, masks and sends them back, because it does not know the shuffle process, it will not know each ${c\left[ j \right]}$ and correspondence of the random number multiplied, the original data cannot be revealed. If each ${c\left[ j \right]}$ is multiplied by the same random number, when sent to the other party, beacuse this random number is over $Z_p$ and the domain space is small, only $p-1$ computations are needed to get all possible results, and then $\left\{ {{{\left\langle {c\left[ j \right]} \right\rangle }^p}} \right\}$ can be revealed. According to the relationship between each ${c\left[ j \right]}$, we can know that  $\left\{ {{{\left\langle {c\left[ j \right]} \right\rangle }^p}} \right\}$ should have an approximately increasing relationship, and the difference between the two numbers before and after is at most 1 or 2, then we can reveal the correct result and get the value and the position of 0. Therefore, here we adopt the method of first shuffling  $\left\{ {{{\left\langle {c\left[ j \right]} \right\rangle }^p}} \right\}$ based on the same random number on both sides, and then multiplying the two. Although both parties know the aforementioned relationship that exists in $\left\{ {{{\left\langle {c\left[ j \right]} \right\rangle }^p}} \right\}$, they do not know the specific value of each bit. Multiplying them after shuffling will destroy this relationship, and there is no longer an approximately increasing relationship. $P_0$ multiplies them by the same random number $N \in \left\{ {1,2,...,p - 1} \right\}$ and sends them to $P_1$. Even if it can get all possible results, it cannot determine the real data, but each pair of ${c\left[ j \right]}$ can still maintain the original corresponding relationship.
$P_1$ can randomly shuffle each pair of ${c\left[ j \right]}$, multiply them by  different random numbers ${M_j} \in \left\{ {1,2,...,p - 1} \right\}$, and send them back. Then $P_0$ multiplies the part of the data of $P_1$ with the same random number $N$, and then add the corresponding values together to find whether there is 0. Since $P_0$ does not know the shuffle process of $P_1$, $P_0$ will not know the original position corresponding to 0, and each  ${c\left[ j \right]}$ obtained is the result of multiplying $M_j$ of $P_1$, which is not the same as the original value. The CheckZero protocol is presented as \textbf{Algorithm \ref{algorithm4}}. Combining with it, we get the 2PC secret sharing comparison protocol, which is presented as \textbf{Algorithm \ref{algorithm5}}.

\subsubsection{Efficiency Discussion}
\label{sec:11}

For each comparison, if the problem of checking 0 is solved according to the string multiplication, for the communication, because the multiplication is performed over $Z_p$,  the data length only needs 1 byte. Then, for string multiplication, the total communication required is $\sum\nolimits_{i = 1}^{\log _2^l - 1} {4 \cdot {l \over {{2^i}}}}  + 2 = \sum\nolimits_{i = 1}^{\log _2^l - 1} {{l \over {{2^{i - 2}}}} + 2}$ bytes. While our method requires $4 \cdot {l \over 2} + {l \over 2} + l = 3.5l$ bytes. As regards communication rounds, string multiplication requires $\log _2^l - 1$ times of multiplication, and then the final reveal, totally  $\log _2^l$ rounds of communication are required. Our method requires only one multiplication, $P_0$ sends messages to $P_1$, and then $P_1$ sends messages to $P_0$, totally 3 communication rounds. We assume here that the data length $l=64$, then we will reduce the number of communication rounds from 6 to 3. At the same time, since the number of multiplications is reduced from $\sum\nolimits_{i = 1}^{\log _2^{l - 1}} {{l \over {{2^i}}}} $ to ${l \over 2}$, the number of multiplication triplets required is also reduced.

\begin{CJK*}{UTF8}{gkai}
    \begin{algorithm}
    \caption{Secret Sharing Comparison}
    \normalsize\textbf{Input}: ${P_i}$ holds ${\left\langle a \right\rangle _i}$
    \\\textbf{Output}: ${P_i}$ gets $bit(a > 0)$
    \\\textbf{Common Randomness}: ${P_i}$ holds ${\left\langle r \right\rangle _i}$, $\{ \left\langle {r[j]} \right\rangle _i^p\} $(shares of bits of r), and the bit $\alpha$ where $\alpha = wrap({r_0},{r_1},L)$
    \\
        \begin{algorithmic}[1] 
                \State for $i \in \{ 0,1\} $ ${P_i}$ do : 
                \State \qquad Compute ${x_i}=2{a_i}+{r_i}$
                \State \qquad Compute ${\beta _i} = wrap(2{a_i},{r_i},L)$
                \State \qquad Reconstruct $x = {\sum x _i}(\bmod L)$
                \State \qquad Compute $\delta  = wrap({x_0},{x_1},L)$
                \State \qquad for $j \in \{ l-1,l-2,...,0\} $ do:
                \State \qquad \qquad Compute shares of $c[j] = 1 - (x[j] - r[j])+\sum\nolimits_{k = j + 1}^{l-1} {(x[k] \oplus r[k])}$
                \State \qquad Invoke CheckZero and ${P_0}$ gets $\eta $
                \State \qquad Compute $\theta  = {\beta _0} + {\beta _1} + \delta  - \eta  - \alpha $
                \State \qquad Return $bit(a > 0) = MSB({a_0}) \oplus MSB({a_1}) \oplus \theta $
                \State end for
        \end{algorithmic}
       \label{algorithm5}
    \end{algorithm}
\end{CJK*}

\subsection{Other Nonlinear Protocols}

\begin{figure}
\centering
\includegraphics[width=0.7\textwidth]{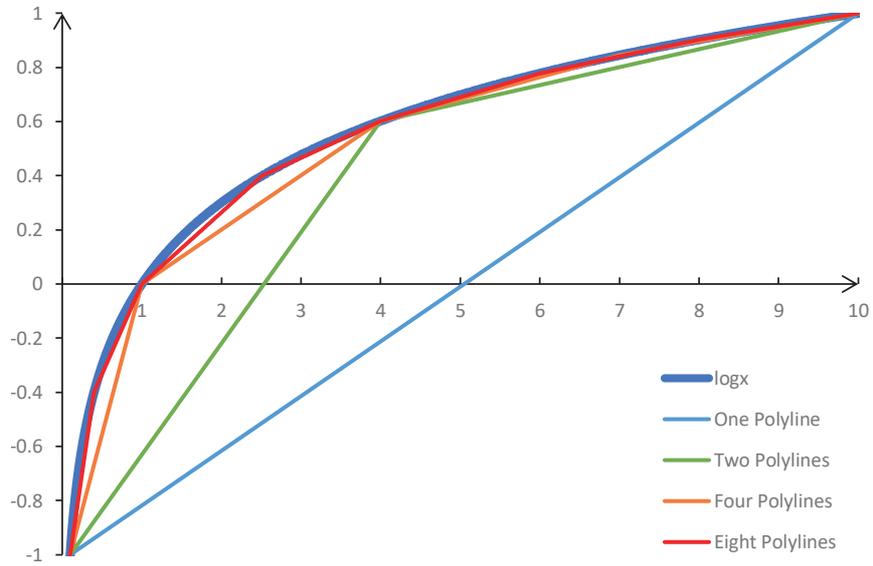}
\caption{Log and its segmentation approximate image}
\label{fig1}
\end{figure}

Similar to the method in FALCON, we implement ReLU, Max, MaxPool, Pow(\textbf{Algorithm \ref{algorithm7}}), and Division \textbf{Algorithm\textbf{ \ref{algorithm8}}}) on 2PC based on the comparison protocol. Because the format of the protocol is roughly the same, it will not be introduced in detail. Here, we give the detailed introduction to the implementation method of the secret sharing Softmax function. As an important nonlinear function in multi-classification tasks, the previous work, as in SecureML, is to approximate it with the ReLU function, 
\begin{equation}
 \begin{aligned}
ASM = {{{\mathop{\rm Re}\nolimits} lu\left( {{u_i}} \right)} \over {\sum {{\mathop{\rm Re}\nolimits} lu\left( {{u_i}} \right)} }}
 \end{aligned}
\end{equation}
Although the model can be trained normally, the accuracy of the model is still lower than that of plaintext training. Since the basic computations involved only include $max$, $e^x$, and $division$. Both $division$ and $max$ have been implemented through the secret sharing comparison algorithm mentioned above. Therefore, as long as we express the exponential function, we can realize the Softmax function based entirely on secret sharing. Based on the reason that the secret sharing method is easy to express the linear computation, we try to approximate $e^x$ with a polynomial. The previous work tries to use a polynomial to approximate the Sigmoid function, but the problem is that when using low-order polynomials, it has bad approximation effect and huge error, while when using higher-order polynomials, it leads to low efficiency, and the higher-order polynomials cannot fit every part of the function. We use the piecewise polynomial method to divide the function into sufficiently detailed parts in a finite domain. Each part is replaced by a linear function. Just like when we were learning derivatives, the teacher once said that as long as the segmentation is detailed enough, then this small line segment can replace the original function, which is presented as \textbf{Figure \ref{fig1}}. For $e^x$, since $x_j-M<0$ (M = Max$(x)$), the domain of $e^x$ is $\left( { - \infty ,0} \right)$, but we can only approximate the function on a finite field.  Because ${e^{ - 10}} < {10^{ - 3}}$, the process error of computing the probability in Softmax is already small enough, so we approximate the domain of $e^x$ to $\left( { -10 ,0} \right)$, divide the specified domain into finite parts, and judge the domain of $x$ by the secret sharing comparison algorithm, then the result can be obtained by bringing in the specified linear function. The detailed level of the specific segmentation is given in the follow-up experiment, and the secret sharing exponential algorithm is presented as \textbf{Algorithm \ref{algorithm6}}. With it, we can get the Softmax function entirely based on sercet sharing.

\begin{CJK*}{UTF8}{gkai}
    \begin{algorithm}
    \caption{Secret Sharing Exponential}
    \normalsize\textbf{Input}: ${P_i}$ holds ${\left\langle x \right\rangle _i}$
    \\\textbf{Output}: ${P_i}$  gets  ${\left\langle {{e^{{x_0} + {x_1}}}} \right\rangle _i}$ 
    \\\textbf{No Randomness}
    \\
        \begin{algorithmic}[1] 
                \State for $i \in \{ 0,1\} $ ${P_i}$ do:
                \State \qquad split the specified domain $({x_0},{x_n})$ into $({x_0},{x_1}),({x_1},{x_2}).....({x_{n - 1}},{x_n})$
                \State \qquad compute the value y corresponding to each division point $y = {e^{{x_j}}}$
                \State \qquad compute the slope of the function on each segment of the domain ${k_j} = {{{y_{j + 1}} - {y_j}} \over {{x_{j + 1}} - {x_j}}}$, get n sets of linear functions $y - {y_j} = {k_j}(x - {x_j})$
                \State \qquad invoke Secret Sharing Comparison and get ${j}$, where $x \in \left( {{x_j},{x_{j + 1}}} \right),j \in \left[ {0,n - 1} \right]$
                \State \qquad invoke Secret Sharing Add/Multiplication and get $\left\langle y \right\rangle $ corresponding to $\left\langle x \right\rangle $
                \State end for
        \end{algorithmic}
       \label{algorithm6}
    \end{algorithm}
\end{CJK*}

\begin{CJK*}{UTF8}{gkai}
    \begin{algorithm}
    \caption{Secret Sharing Pow}
    \textbf{Input}: ${P_i}$ holds ${\left\langle b \right\rangle _i}$
    \\\textbf{Output}: ${P_i}$  gets $\alpha ,{2^{\alpha  - 1}} < b \le {2^\alpha }$
    \\\textbf{No Randomness}
    \\
        \begin{algorithmic}[1] 
                \State for $i \in \{ 0,1\} $ ${P_i}$ do:
                \State \qquad $\left\langle x \right\rangle  = \left\langle b \right\rangle ,\alpha  = 0$
                \State \qquad for $i \in \{ l - 1.....,2,1\} $ do:
                \State \qquad \qquad $\left\langle {{d_x}} \right\rangle  = \left\langle x \right\rangle  - {2^{{2^i} + \alpha }}$
                \State \qquad \qquad invoke Secret Sharing Compare and get $c = bit({d_x} > 0)$
                \State \qquad \qquad if $c = 1$ then:
                \State \qquad \qquad \qquad $\left\langle x \right\rangle  = \left\langle {{d_x}} \right\rangle ,\alpha  = \alpha  + {2^i}$
                \State \qquad \qquad end if
                \State \qquad  end for
                \State \qquad ${P_i}$ gets $\alpha $
                \State end for
        \end{algorithmic}
       \label{algorithm7}
    \end{algorithm}
\end{CJK*}

\begin{CJK*}{UTF8}{gkai}
    \begin{algorithm}
    \caption{Secret Sharing Division}
    \textbf{Input}: ${P_i}$ holds ${\left\langle a \right\rangle _i},{\left\langle b \right\rangle _i}$
    \\\textbf{Output}: ${P_i}$  gets $\left\langle {a/b} \right\rangle $ with a given fixed presion ${f_p}$
    \\\textbf{No Randomness}
    \\
        \begin{algorithmic}[1] 
                \State for $i \in \{ 0,1\} $ ${P_i}$ do:
                \State \qquad invoke Secret Sharing Pow and get $\alpha $ where ${2^{\alpha  - 1}} < b \le {2^\alpha }$
                \State \qquad $c = b/{2^{\alpha}}$
                \State \qquad ${w_0} = 2.9142 - 2c$
                \State \qquad compute ${\varepsilon _1} = 1 - c \cdot {w_0}$ and ${\varepsilon _1} = \varepsilon _0^2$ and ${\varepsilon _2} = \varepsilon _1^2$
                \State \qquad ${P_i}$ gets ${\left\langle {{a \over b}} \right\rangle _i} = a{w_0}(1 + {\varepsilon _0})(1 + {\varepsilon _1})(1 + {\varepsilon _2})$
                \State end for
        \end{algorithmic}
       \label{algorithm8}
    \end{algorithm}
\end{CJK*}

\subsection{Linear Computation}

The linear computation in the neural network is mainly concentrated on the fully connected layer and the convolutional layer. The fully connected layer can be computed by matrix multiplication, and the convolutional layer can also be transformed and computed by matrix multiplication too. Here we also use the matrix multiplication form of secret sharing multiplication mentioned in SecureML, the specific algorithm is presented as \textbf{Algorithm \ref{algorithm9}}. Secret sharing matrix multiplication can reduce the use of pre-computed multiplication triplets, and the same mask data is used for the same data. Similarly, for the computational characteristics of neural networks, because it usually trains a set of dataset for multiple epochs, it will cause multiple repetitions of the input layer data. We can make the two servers send their own part of the data masked to each other at the beginning, which can reduce the communication of the input layer and further reduce the use of multiplication triplets.

\begin{CJK*}{UTF8}{gkai}
    \begin{algorithm}
    \caption{Secret Sharing Matrix Multiplication}
    \textbf{Input}: ${P_i}$ holds $\left\langle A \right\rangle _i^{m \times n},\left\langle B \right\rangle _i^{n \times v}$
    \\\textbf{Output}: ${P_i}$ gets $\left\langle C \right\rangle _i^{m \times v}$
    \\\textbf{Common Randomness}: ${P_i}$ holds multiplication triplets $\left\langle U \right\rangle _i^{m \times n},\left\langle V \right\rangle _i^{n \times v}$,$\left\langle Z \right\rangle _i^{m \times v}$
    \\
        \begin{algorithmic}[1] 
                \State for $i \in \{ 0,1\} $ ${P_i}$ do:
                \State \qquad ${\left\langle E \right\rangle _i} = {\left\langle A \right\rangle _i} - {\left\langle U \right\rangle _i},{\left\langle F \right\rangle _i} = {\left\langle B \right\rangle _i} - {\left\langle V \right\rangle _i}$
                \State \qquad Reveal E and F
                \State \qquad ${P_i}$ gets ${\left\langle C \right\rangle _i} = i \cdot E \times F + {\left\langle U \right\rangle _i} \times F + E \times {\left\langle V \right\rangle _i} + {\left\langle Z \right\rangle _i}$
                \State end for
        \end{algorithmic}
       \label{algorithm9}
    \end{algorithm}
\end{CJK*}

\section{Security Analysis}

We use the real world-ideal world simulation paradigm method to prove the security of the protocol. This paradigm is briefly introduced as follows. In a real interaction, the parties execute the protocol $\Pi $ in a certain environment ${\cal Z}$, where an adversary ${\cal A}$ exists, and there is another ideal interaction ${\cal I}$. In the ideal environment ${\cal I}$, all parties send their inputs to a trusted third party to implement the protocol ${\cal F}$ completely and truly. Finally, to prove the security of the protocol, for each adversary ${\cal A}$ that exists in the real interaction ${\cal R}$, there is a simulator ${\cal S}$ in the ideal interaction, if the environment cannot distinguish between the two interactions, this protocol is secure. That is, the information obtained by the adversary in the real interaction and the information obtained by the simulator in the ideal interaction are the same in the category of informatics and are indistinguishable. As a proof, it is only necessary to check whether the designed simulator has the ability to generate messages that are indistinguishable from the real-world interaction messages. Due to space constraints, we formally describe the functionalities in Appendix A. We describe simulators for ${\Pi _{GenerateR}}$ (\textbf{Figure \ref{fig8}}), ${\Pi _{GetWrapped}}$ (\textbf{Figure \ref{fig9}}), ${\Pi _{CheckZero}}$ (\textbf{Figure \ref{fig10}}), ${\Pi _{Compare}}$ (\textbf{Figure \ref{fig11}}), ${\Pi _{e^x}}$ (\textbf{Figure \ref{fig12}}) that achieve indistinguishability. ${{\cal F}_{Mult}}$,${{\cal F}_{OT}}$,${{\cal F}_{GC}}$ are identical to prior works. Thus, our protocol is secure under this paradigm.

\newtheorem{thm}{\bf Theorem}
\begin{thm}\label{thm1}
GenerateMaskR in \textbf{Algorithm \ref{algorithm2}} securely realizes ${{\cal F}_{GenerateMaskR}}$  in the presence of a semi-honest admissible adversary in the ${{\cal F}_{OT}}$  hybrid model.
\end{thm}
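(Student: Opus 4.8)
The plan is to give a simulation-based proof in the real/ideal paradigm, working in the $\mathcal{F}_{OT}$-hybrid model in which each OT invocation on line~5 of \textbf{Algorithm~\ref{algorithm2}} is answered by the ideal functionality $\mathcal{F}_{OT}$. Since the adversary is semi-honest and admissible, it corrupts at most one of the two servers, so I would exhibit two simulators: $\mathcal{S}_0$ for a corrupted $P_0$ and $\mathcal{S}_1$ for a corrupted $P_1$. Each simulator is handed the corrupted party's prescribed output from $\mathcal{F}_{GenerateMaskR}$ — its bit shares $\{\langle r[j]\rangle_i^p\}$ and its ring share $\langle r\rangle_i$ — and must reconstruct a view indistinguishable from a real execution.

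First I would treat a corrupted $P_0$. Here $P_0$ acts as the OT \emph{sender}, so the only thing it receives from $\mathcal{F}_{OT}$ is $\bot$; its view is therefore just its own sampled values $a_j$ together with the random shuffle bits. The simulator $\mathcal{S}_0$ sets $a_j := \langle r[j]\rangle_0^p$ (the share given by the ideal functionality), draws fresh shuffle bits, and returns $\bot$ for every OT call. Because the $a_j$ are uniform over $(0,p)$ in both worlds and the shuffle bits leak nothing to the sender, the two views coincide; the local evaluation of $\langle r\rangle_0$ via Equation~\ref{equ:8} is a deterministic function of the bit shares and so contributes nothing further to simulate.

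Next I would treat a corrupted $P_1$, which acts as the OT \emph{receiver}. Its view consists of the self-chosen choice bits $c_j$ and the delivered values $b_{c_j}$. The simulator $\mathcal{S}_1$ samples the $c_j$ honestly and, at each OT call, delivers $b_{c_j} := \langle r[j]\rangle_1^p$ taken from the ideal output. The key point is that $P_0$ shuffles the pair $\{p-a_j,\,p-a_j+1\}$ before the transfer while $P_1$'s choice is oblivious to that shuffle, so $b_{c_j}$ is an unbiased draw from that pair and reveals nothing about whether the shared bit $r[j]$ equals $0$ or $1$. Substituting the ideal share for the real OT output is thus undetectable, and $\langle r\rangle_1$ is again recomputed deterministically from the bit shares.

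The step I expect to demand the most care is matching the \emph{marginal law} of each received value $b_{c_j}$ in the real protocol to the distribution of $\langle r[j]\rangle_1^p$ that $\mathcal{F}_{GenerateMaskR}$ is defined to output. This requires a short accounting over $Z_p$ (with $p=67$): since $a_j$ is drawn from the open range $(0,p)$ and the transferred pair is $\{p-a_j,\,p-a_j+1\}$, the induced distribution of $P_1$'s share is uniform on $\{2,\dots,p-1\}$ but carries only half weight on the two boundary values $0$ and $1$, so it is \emph{not} exactly uniform on $Z_p$. I would therefore check that the ideal functionality is specified to sample its shares with precisely this law (equivalently, that the joint distribution of $(\langle r[j]\rangle_0^p,\langle r[j]\rangle_1^p,r[j])$ is identical in both worlds), and flag the mild boundary non-uniformity as the point that must be reconciled — either by defining $\mathcal{F}_{GenerateMaskR}$ to match it exactly or by bounding the resulting statistical distance. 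Once this per-coordinate equality of distributions is settled, independence across the $l$ bit positions and across the two rings follows from the independent sampling of all randomness, and indistinguishability of the full views — hence security in the $\mathcal{F}_{OT}$-hybrid model — follows by a standard hybrid argument over the $l$ OT calls.
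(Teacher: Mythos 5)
Your proposal takes the same basic route as the paper's proof---simulation in the $\mathcal{F}_{OT}$-hybrid model, with all other steps treated as local computation requiring no simulation---but it is substantially more careful, and the extra care surfaces something the paper misses. The paper's proof is three sentences: it asserts the first four steps are local, invokes the $\mathcal{F}_{OT}$ simulator for Step~5, and claims ``the distribution of $b_0, b_1$ and $c$ are all uniformly random from the adversary's view''; it never splits into the two corruption cases (sender $P_0$ versus receiver $P_1$) as you do, and it never checks that the joint distribution of the parties' output shares in the real execution matches what $\mathcal{F}_{GenerateR}$ (\textbf{Figure \ref{fig8}}) is supposed to emit. Your boundary observation is a genuine finding, not pedantry: because $a_j$ is sampled from $(0,p)$ rather than all of $Z_p$, $P_0$'s share is never $0$, and $P_1$'s received value takes each of $0$ and $1$ with probability $\frac{1}{2(p-1)}$ but each value in $\{2,\dots,p-1\}$ with probability $\frac{1}{p-1}$; so neither share is uniform over $Z_p$, the paper's ``uniformly random'' claim is literally false, and with $p=67$ the per-coordinate statistical distance from uniform is on the order of $1/p$, which is not negligible and is visible to the environment through the corrupted party's own output. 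For the theorem to hold as stated, one must do exactly what you flag: either define $\mathcal{F}_{GenerateMaskR}$ so that its shares are sampled with precisely the law the protocol induces (the paper's Figure~\ref{fig8} says only ``random shares,'' leaving this implicit), or amend Algorithm~\ref{algorithm2} to sample $a_j$ uniformly from all of $Z_p$, after which both your simulators go through and uniformity holds exactly. Your proposal is the only one of the two arguments that makes this reconciliation explicit, and it is the correct completion of the paper's sketch.
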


\newtheorem{proof}{Proof}[section]
\begin{proof}
The fisrt four steps are computed locally, the numbers are all generated randomly, and do not need to be simulated. The interaction only exists in the OT protocol in Step 5, the simulator for ${{\cal F}_{OT}}$ can be used to simulate the transcripts in it, the distribution of ${b_0},{b_1}$ and $c$ are all uniformly random from the adversary's view.
\end{proof}

\begin{thm}\label{thm2}
GetWrapped in \textbf{Algorithm \ref{algorithm3}} securely realizes ${{\cal F}_{GetWrapped}}$ in the presence of a semi-honest admissible adversary in the ${{\cal F}_{GC}}$ hybrid model.
\end{thm}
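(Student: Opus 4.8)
The plan is to reuse the real/ideal simulation template applied to Theorem~\ref{thm1}, exploiting the fact that \textbf{Algorithm~\ref{algorithm3}} performs no communication outside the single garbled-circuit call. First I would observe that Steps~1--3 consist only of each $P_i$ locally deriving $m_i = \mathrm{MSB}(r_i)$ and $m^i = (\langle r[0]\rangle_i^p + i)\%2$ from values it already holds; these produce no transcript and hence require no simulation. The entire interaction is concentrated in Step~4, the invocation of the garbled circuit evaluating the boolean formula read off from Table~\ref{tab1}. Since we argue in the $\mathcal{F}_{GC}$ hybrid model, this step is replaced by a single call to the ideal functionality $\mathcal{F}_{GC}$, whose associated simulator $\mathcal{S}_{GC}$ (identical to prior work) may be invoked as a black box.

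The simulator $\mathcal{S}$ is then built as follows. Without loss of generality suppose the semi-honest admissible adversary corrupts $P_0$ (the case of $P_1$ is symmetric, and corruption of more than one server is excluded by admissibility). Given $P_0$'s input shares, $\mathcal{S}$ computes $m_0$ and $m^0$ exactly as the honest protocol prescribes. It queries $\mathcal{F}_{GetWrapped}$ to obtain the output bit $wrap(r_0,r_1,L)$, feeds $(m_0,m^0)$ together with this output into $\mathcal{S}_{GC}$, emits the transcript returned by $\mathcal{S}_{GC}$ as the simulated view of the garbled-circuit call, and appends the output bit as $P_0$'s output. Because $\mathcal{S}_{GC}$ produces transcripts indistinguishable from a real $\mathcal{F}_{GC}$ execution given only the corrupted party's circuit input and the circuit output, the simulated view is indistinguishable from the real one by a standard composition of $\mathcal{S}_{GC}$ with the local computations of Steps~1--3.

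For the output itself I would appeal to Table~\ref{tab1}, which establishes that the expressed formula equals $wrap(r_0,r_1,L)$ on every assignment of $(m_0,m_1,m^0,m^1)$, so the ideal and real outputs coincide and correctness holds. The security-relevant subtlety, which I expect to be the main point requiring justification, is that \emph{both} parties are allowed to learn this output bit (indeed it is used as the common randomness $\alpha$ in \textbf{Algorithm~\ref{algorithm5}}). This is safe because $r$ is a freshly preprocessed, uniformly random mask that is statistically independent of any sensitive comparison input, so $wrap(r_0,r_1,L)$ carries no information about the data. Consequently the only obstacle is confirming that $\mathcal{F}_{GC}$ reveals nothing beyond this output bit --- exactly the guarantee the hybrid model provides --- after which indistinguishability of the full view follows immediately.
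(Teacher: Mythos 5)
Your proposal takes essentially the same approach as the paper's proof: observe that Steps~1--3 of \textbf{Algorithm~\ref{algorithm3}} are purely local computations requiring no simulation, and that the only interaction, the garbled-circuit call in Step~4, is handled by invoking the simulator for ${\cal F}_{GC}$ as a black box in the hybrid model. Your version is considerably more explicit than the paper's (which is only two sentences), adding the concrete simulator construction, the correctness check against \textbf{Table~\ref{tab1}}, and the observation that revealing $wrap(r_0,r_1,L)$ is harmless because $r$ is an independent random mask --- all sound elaborations of the same argument.
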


\begin{proof}
The first three steps of the algorithm are computed locally, so there is no interaction and no simulation is required. There is interaction only in Step 4 where the garbled circuit is called. and can be simulated by the simulator using in ${{\cal F}_{GC}}$.
\end{proof}

\begin{thm}\label{thm3}
CheckZero in \textbf{Algorithm \ref{algorithm4}} securely realizes ${{\cal F}_{checkzero}}$ in the presence of a semi-honest admissible adversary in the ${{\cal F}_{Mult}}$ hybrid model.
\end{thm}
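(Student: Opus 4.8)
The plan is to follow the same real/ideal simulation template used for Theorems~\ref{thm1} and~\ref{thm2}, working in the $\mathcal{F}_{Mult}$-hybrid model so that the secret-sharing multiplication in Step~2 is replaced by an ideal call whose transcript is produced by the known $\mathcal{F}_{Mult}$ simulator. Since the admissible adversary corrupts at most one of $P_0,P_1$, I would build two simulators, one for each corrupted server, and argue that in each case the simulated view is identically (or statistically) distributed to the real one. The steps computed locally (generating $N$, the $M_j$, and the shuffles from the shared seed $\theta$) carry no transcript, so the only messages to be simulated are the Step~3 flow $P_0\to P_1$, namely $\{\langle d[j]\rangle_0^p\cdot N\}$, and the Step~4 flow $P_1\to P_0$, namely the shuffled pair $\{\langle d^{**}[j]\rangle_0^p\},\{\langle d^{*}[j]\rangle_1^p\}$.

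For a corrupted $P_1$ the argument is the easy direction: $P_1$ obtains no output of $\mathcal{F}_{checkzero}$, so the simulator need only reproduce the incoming Step~3 message. Because each $\langle d[j]\rangle_0^p$ is a fresh uniform share handed to $P_0$ by $\mathcal{F}_{Mult}$ and $N\in\{1,\dots,p-1\}$ is a uniform nonzero scalar held only by $P_0$, the product $\langle d[j]\rangle_0^p\cdot N$ is uniform over $Z_p$, since multiplication by a fixed nonzero element of the field $Z_p$ is a bijection. I would therefore have the simulator sample $l/2$ independent uniform elements of $Z_p$ and hand them to $P_1$, after which $P_1$ runs its honest local computation (choosing $M_j$, masking, shuffling) exactly as in the protocol. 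Indistinguishability is then immediate from the uniformity just noted.

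For a corrupted $P_0$, the substantive case, the simulator is given the output $\eta$ and must reproduce the Step~4 message so that $P_0$'s reconstruction, which equals $d[j]\cdot N\cdot M_j\bmod p$ with $d[j]=c[j]\cdot c[j+l/2]$ the product formed in Step~2, carries exactly the information $\eta$. The key observations are that for every index with $d[j]\neq 0$ the factor $M_j$ (uniform nonzero and independent across $j$) makes the reconstruction a uniform nonzero element, whereas an index with $d[j]=0$ always reconstructs to $0$; and that $P_1$'s synchronous random shuffle destroys any link between a reconstructed value and its original position. Consequently the multiset of reconstructions is completely described by the number of zeros it contains. Invoking the structural guarantee of Eq.~(\ref{equ:11}), that at most one $c[j]$, hence at most one $d[j]$, vanishes, this number is $0$ when $\eta=0$ and $1$ when $\eta=1$. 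I would thus have the simulator, knowing $\eta$, place a single zero (if $\eta=1$) among otherwise uniform nonzero reconstructions, split each into a random additive pair to form $\langle d^{**}[j]\rangle_0^p,\langle d^{*}[j]\rangle_1^p$ consistent with the values $\{\langle d[j]\rangle_0^p\cdot N\}$ that $P_0$ itself sent in Step~3, and finally apply a uniformly random permutation.

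The main obstacle I expect is precisely this corrupted-$P_0$ step: establishing that the triple consisting of what $P_0$ sent in Step~3, the masked-and-shuffled values it receives in Step~4, and its knowledge of $N$ together reveal nothing beyond $\eta$. The delicate point is independence, in that each reconstruction must be an independent uniform nonzero value conditioned on being nonzero, which rests on the $M_j$ being drawn independently and on the shuffle being a uniformly random permutation unknown to $P_0$; and it is here that the at-most-one-zero property from the comparison protocol is needed, so that the simulator can recover the correct zero-pattern from the single bit $\eta$. Once this distributional claim is in hand, indistinguishability of the full view follows by composing it with the $\mathcal{F}_{Mult}$ simulator for Step~2, completing the proof in both corruption cases.
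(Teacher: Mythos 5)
Your overall framework is the same as the paper's: work in the $\mathcal{F}_{Mult}$-hybrid model, build one simulator per corrupted server, and argue that the multiplicative masks plus the shuffle make everything $P_0$ receives look random given only $\eta$. The paper compresses that argument into the single assertion that the values sent back to $P_0$ are ``uniformly random from the adversary's view.'' The problem is that this assertion --- which you correctly single out as the main obstacle and then claim follows from the independence of the $M_j$ and the secrecy of the shuffle --- is false, so your corrupted-$P_0$ simulator is distinguishable from the real execution. The reason is a cancellation: the \emph{same} mask $M_k$ (with $k=\sigma(j)$) multiplies both components of the pair $P_0$ receives. The first component is $\langle d^{**}[j]\rangle_0^p = \langle d[k]\rangle_0^p \cdot N \cdot M_k$, while the reconstruction $P_0$ forms in Step 6 is $N \cdot M_k \cdot d[k]$; their ratio is $d[k]/\langle d[k]\rangle_0^p$, in which $M_k$ (and $N$) cancel. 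Since $P_0$ knows its own shares $\langle d[k]\rangle_0^p$, the real view pins down the multiset $\bigl\{ d[k]/\langle d[k]\rangle_0^p \bigr\}$, which varies with the actual products $d[k]$ even for fixed $\eta$; the shuffle hides only \emph{which} index each ratio belongs to, not the ratios themselves. Your simulator draws the reconstructions $R_j$ and the first components with independent fresh randomness, so in the simulated view these ratios are uniform and independent of the shares; a distinguisher who knows the inputs (as the security definition permits) simply checks whether some bijection matches the observed ratios against $\{ d[k]/\langle d[k]\rangle_0^p \}$ --- this holds with probability $1$ in the real world and only rarely in your simulation. No simulator given only $\eta$ can repair this, because the leaked multiset genuinely differs across inputs having the same $\eta$.

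A similar cancellation also breaks the direction you call easy. From a corrupted $P_1$'s standpoint, $\langle d[j]\rangle_0^p$ is \emph{not} a fresh uniform value: conditioned on $P_1$'s own share it equals $d[j]-\langle d[j]\rangle_1^p$. Because one global $N$ multiplies every coordinate of the Step 3 message, the cross-coordinate ratios of the received values are independent of $N$ and equal $(d[j]-\langle d[j]\rangle_1^p)/(d[j']-\langle d[j']\rangle_1^p)$; the received vector is thus supported on only $p-1$ points of $Z_p^{l/2}$ determined by the unknown $d$, rather than being uniform, and again cannot be produced by a simulator ignorant of $d$. So the gap is not a fixable level-of-detail issue in your write-up: the distributional claim at the heart of both your proof and the paper's three-sentence proof of Theorem \ref{thm3} fails, because the masking in \textbf{Algorithm \ref{algorithm4}} (one reused $N$, and per-coordinate $M_j$ applied to both shares) leaks strictly more than $\mathcal{F}_{CheckZero}$ allows. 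Carrying your argument out rigorously would expose this leakage rather than establish the theorem.
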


\begin{proof}
Step 2 calls the secret sharing multiplication, which can be simulated by using the simulator for ${{\cal F}_{Mult}}$. In Step 3,4, the numbers generated are all randomly selected. The simulator can generate transcripts with the same distribution. The result sent back to ${P_0}$ in Step 5 is the same in distribution as the original data, 0 is no longer in the original position, and other values are also the results masked, so the output are uniformly random from the adversary's view.
\end{proof}

\begin{thm}\label{thm4}
Secret Sharing Comparison in \textbf{Algorithm \ref{algorithm5}} securely realizes ${{\cal F}_{compare}}$ in the presence of a semi-honest admissible adversary in the (${{\cal F}_{GenerateMaskR}}$,  ${{\cal F}_{GetWrapped}}$,  ${{\cal F}_{CheckZero}}$) hybrid model.
\end{thm}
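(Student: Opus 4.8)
The plan is to follow the same real/ideal simulation argument used for Theorems \ref{thm1}--\ref{thm3}, constructing a simulator ${\cal S}$ that, given only the corrupted server's view together with access to the ideal functionalities of the hybrid model, produces a transcript indistinguishable from a real execution of \textbf{Algorithm \ref{algorithm5}}. Since we operate in the $({{\cal F}_{GenerateMaskR}},{{\cal F}_{GetWrapped}},{{\cal F}_{CheckZero}})$-hybrid model, the common randomness $({\left\langle r \right\rangle _i},\{ \left\langle {r[j]} \right\rangle _i^p\} ,\alpha )$ as well as the CheckZero output $\eta$ are all supplied through idealized calls; ${\cal S}$ simply invokes the sub-simulators already established in those theorems, and by composition these idealized calls contribute no distinguishing advantage. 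What remains is therefore to simulate only the genuinely protocol-specific messages of \textbf{Algorithm \ref{algorithm5}}.

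First I would classify the steps into local and interactive. Steps 2, 3, 5, 7, 9 and 10 are purely local: each party computes $x_i = 2a_i + r_i$ and $\beta_i = wrap(2a_i,r_i,L)$ from data it already holds, and once $x$ has been reconstructed it recovers the wrap bit $\delta$ by comparing $x$ against its own share $x_i$, and forms the shares of each $c[j]$ linearly, since for a public bit $x[k]$ the operation $x[k]\oplus r[k]$ is affine in the share $\left\langle {r[k]} \right\rangle ^p$. The final combination $\theta = \beta_0+\beta_1+\delta-\eta-\alpha$ and the XOR with $MSB(a_i)$ in Step 10 are likewise local, each party being able to assemble its own share of the output bit from values it possesses ($\eta$ residing with $P_0$). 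None of these steps produce messages, so they require no simulation.

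The only genuine message exchange is the reconstruction of $x$ in Step 4. The key observation is that in the real protocol $x = 2a + r \bmod L$, and because $r$ is a uniformly random mask over $Z_L$ (guaranteed by ${{\cal F}_{GenerateMaskR}}$) that is independent of $a$, the revealed value $x$ is uniformly distributed and independent of $a$. Accordingly ${\cal S}$ samples $x \leftarrow Z_L$ uniformly and delivers the honest party's share as $x - x_i \bmod L$, using the corrupted party's known share $x_i$; this message is identically distributed to the real one. The CheckZero output $\eta$ is returned by the ${{\cal F}_{CheckZero}}$ simulator and, when $P_0$ is corrupted, ${\cal S}$ fixes it to the unique value consistent with $P_0$'s share of the ideal output, so that combining it with the locally held $\beta_i,\delta,\alpha,MSB(a_i)$ yields a bit agreeing with ${{\cal F}_{compare}}$ by the correctness of Equation \ref{equ:10}.

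The hard part will be the masking argument for Step 4: establishing rigorously that revealing $x$ leaks nothing about the secret $a$ rests entirely on the uniformity and independence of the mask $r$ produced by GenerateMaskR, so the proof must explicitly invoke that guarantee (and the fact that $2a+r$ is uniform over $Z_L$) rather than assume it. A secondary subtlety is verifying that, once $x$ is public, every remaining computation --- in particular the wrap bit $\delta$ and the per-bit $c[j]$ shares --- is indeed carried out without any further communication, so that no transcripts beyond the Step 4 reveal and the ${{\cal F}_{CheckZero}}$ call need to be simulated.
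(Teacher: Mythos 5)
Your proposal is correct and takes essentially the same route as the paper's own proof: both work in the hybrid model, dispatch the common randomness and the Step 8 call to the sub-simulators for ${{\cal F}_{GenerateMaskR}}$, ${{\cal F}_{GetWrapped}}$, and ${{\cal F}_{CheckZero}}$, observe that all remaining steps are local, and simulate the single reveal of $x$ in Step 4 as a uniformly random element of $Z_L$ owing to the uniformity of the mask $r$. Your write-up is simply more explicit than the paper's (the $x - x_i$ message construction and the output-consistency fix for $\eta$), but the underlying argument is the same.
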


\begin{proof}
The common randomness can be simulated using the simulators for ${{\cal F}_{GenerateMaskR}}$, ${{\cal F}_{GetWrapped}}$. The algorithm only communicates in steps 4 and 8, and \textbf{Algorithm \ref{algorithm4}} is called in Step 8. Other steps are all local computation and do not need simulation. In Step 4, the $x$ sent to each other is the result of the mask of the mask $r$ that is randomly generated in advance, from the adversary's view, these transcripts are all uniformly random values, and the simulator $\cal S$ can simulate them in the same distribution. And Step 8 can be simulated using the simulator for ${{\cal F}_{CheckZero}}$.
\end{proof}

\begin{thm}\label{thm5}
Secret sharing exponential in \textbf{Algorithm \ref{algorithm6}} securely realizes ${{\cal F}_{{e^x}}}$ in the presence of a semi-honest admissible adversary in the (${{\cal F}_{Mult}}$,${{\cal F}_{compare}}$) hybrid model.
\end{thm}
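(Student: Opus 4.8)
The plan is to give a standard simulation-based argument in the $({\cal F}_{Mult},{\cal F}_{compare})$-hybrid model, following the template of Theorems \ref{thm1}--\ref{thm4}. First I would observe that Steps 2--4 of \textbf{Algorithm \ref{algorithm6}} operate only on the publicly agreed partition points $x_0,\dots,x_n$ and the associated values $y_j = e^{x_j}$ and slopes $k_j$; these are fixed parameters of the approximation, are identical in the real and ideal executions, and involve neither interaction nor secret data, so they require no simulation. The only communication in the protocol therefore arises from the invocation of Secret Sharing Comparison in Step 5 and of Secret Sharing Multiplication (addition being free) in Step 6.

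Next I would construct the simulator $\cal S$ for the corrupted server by composing the sub-simulators guaranteed by the security of these building blocks. Since by Theorem \ref{thm4} the comparison protocol securely realizes ${\cal F}_{compare}$, and the multiplication functionality ${\cal F}_{Mult}$ (identical to prior work, hence equipped with its own simulator) is assumed secure, in the hybrid model each such call is replaced by an ideal invocation whose transcript $\cal S$ can reproduce using the corresponding sub-simulator. The overall simulator runs these sub-simulators in sequence and hands the adversary the concatenation of their outputs. Every message the adversary sees is then either a fresh share or a value masked by randomness generated inside a sub-protocol, hence uniformly distributed and independent of the honest party's input from the adversary's view.

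The step I expect to require the most care is the handling of the segment recovered in Step 5: if the index $j$ were revealed in the clear it would disclose the bin containing $x = x_0 + x_1$, which is more than the shares of $e^{x_0+x_1}$ that ${\cal F}_{e^x}$ returns, and the simulation would fail. To keep the protocol oblivious I would instead obtain secret-shared indicator bits $\langle b_j\rangle$ with $b_j = \mathbf{1}[x\in(x_j,x_{j+1})]$ from the comparisons and use the ${\cal F}_{Mult}$ calls of Step 6 to perform an oblivious selection $\langle y\rangle = \sum_j \langle b_j\rangle\cdot\big(k_j(\langle x\rangle - x_j)+y_j\big)$, so that the only secret-dependent information leaving the sub-protocols is shares. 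The composed transcript then reveals nothing about which piece was selected, and indistinguishability follows directly from the indistinguishability of the sub-simulators together with the sequential composition theorem of the Universal Composition framework, completing the argument.
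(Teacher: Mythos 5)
Your simulator construction follows the same route as the paper's proof, which likewise observes that the only interaction is the comparison call in Step 5 and the multiplication call in Step 6 and then concludes, by the hybrid argument, that the protocol ``simply composes'' ${\cal F}_{Mult}$ and ${\cal F}_{compare}$ and is simulated by the corresponding sub-simulators. Where you genuinely depart from the paper is your third paragraph: the paper's two-line proof is silent on whether the segment index $j$ of Step 5 is revealed in the clear, and this is precisely the point on which the theorem stands or falls. Read literally, Algorithm \ref{algorithm6} (``get $j$'') together with the stated output of Algorithm \ref{algorithm5} (``$P_i$ gets $bit(a>0)$'') would disclose to both servers which segment $(x_j,x_{j+1})$ contains $x$; no simulator interacting only with ${\cal F}_{e^x}$ (Figure \ref{fig12}, which returns nothing but random shares of $e^x$) can reproduce that index, since it depends on the honest party's share, so the theorem would be false for that reading of the protocol. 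Your fix --- keeping the comparison outputs as secret-shared indicator bits $\langle b_j\rangle$ and folding the selection into the ${\cal F}_{Mult}$ calls of Step 6 via $\langle y\rangle = \sum_j \langle b_j\rangle\bigl(k_j(\langle x\rangle - x_j)+y_j\bigr)$ --- is what makes the composition argument actually sound, and it is the reading consistent with the appendix functionality ${\cal F}_{Compare}$ (Figure \ref{fig11}), which returns \emph{shares} of the comparison bit rather than the bit itself. In short: same decomposition and same simulation strategy, but your version supplies the obliviousness argument the paper omits; the paper's proof buys brevity at the cost of glossing over the one step where the protocol could genuinely leak.
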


\begin{proof}
The algorithm only calls \textbf{Algorithm \ref{algorithm5}} in the step 5 and calls the secret-sharing multiplication in the step 6. Simulation is done as before using the hybrid argument. The protocol simply composes ${{\cal F}_{Mult}}$ and ${{\cal F}_{compare}}$ and hence is simulated using the corresponding simulators.
\end{proof}

Algorithm 1,7,8 and 9 have been proven to be secure in SecureML\cite{mohassel2017secureml} and FALCON\cite{wagh2020falcon}, we only convert them to 2PC. Due to space limitations, we do not introduce them here.

\section{Summary}

\begin{figure*}
\includegraphics[width=\textwidth]{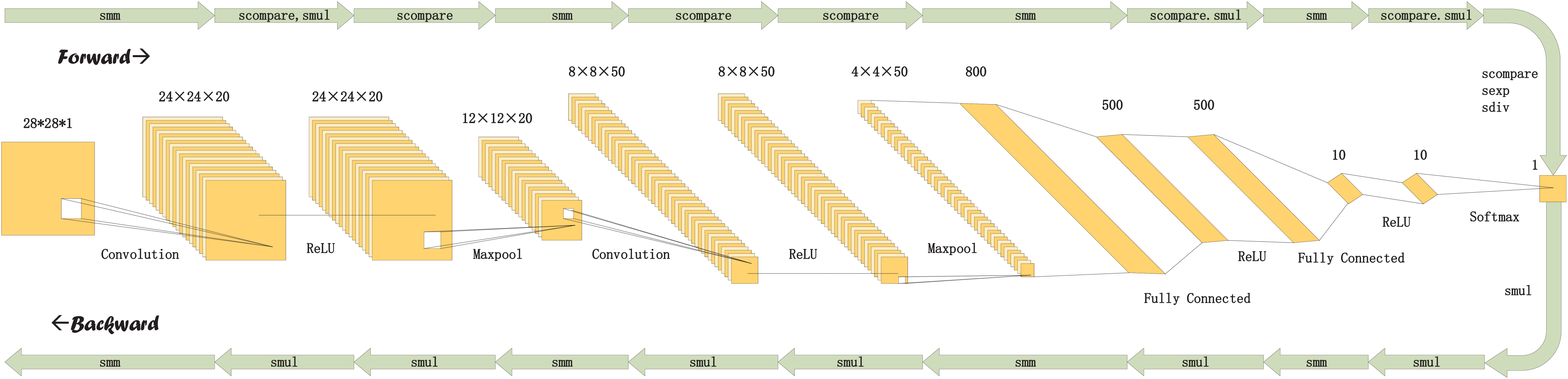}
\caption{LeNet's structure and its secure training flow chart}
\label{fig2}
\end{figure*}

After completing all the basic operations of the neural network training, we present the whole neural network training process in this part. As shown in \textbf{Figure \ref{fig2}}, we give the network structure of LeNet corresponding to Network-C in the subsequent experimental part and the flow chart of the training using the secret sharing method. By only using Secret Sharing, this one of the cryptography methods, we can achieve the whole process of secure training.

\section{Experimental Evaluation}

We give the experimental results of the effect of the exponential function segmentation degree on accuracy, the performance of training and inference on four neural network structures based on the new building blocks using the MNIST dataset\cite{lecun-mnisthandwrittendigit-2010}, and compare with the frameworks designed in the previous works.

\subsection{Experimental Setup}

Our experiment is written in C++ and use the Libtorch and Emp-toolkit library for the invocations of Oblivious Transfer and Garbled Circuit. In the whole protocol, the data length is set to 64, and the retained decimal digit ${l_D}$ is set to 16 for precision. For secret sharing multiplication on 2PC, which requires pre-computation of multiplication triples, we still use the algorithms in SecureML. We run our experiments on a workstation running Ubuntu 18.04 equipped with two GTX 1080Ti graphics cards with 64G RAM in the LAN and WAN setting. For the LAN setting, the bandwidth is approximately 625MBps, and for the WAN setting, we use the traffic control (TC) command to set the port speed limit and give a bandwidth setting of 40MBps.

\subsection{Network Structure Setting}

In order to facilitate the performance comparison, we use the same four neural network structures as in SecureNN, which are also the networks used in recent works, like SecureML, ABY$^3$, and FALCON. Here is a brief introduction,

\begin{itemize}
  \item [1)] 

Network-A: a 3-layer Deep Neural Network mentioned in SecureML\cite{mohassel2017secureml}. (1) A 28$ \times $28 fully connected layer, and the activation functions following this layer are ReLU. (2) A 128 fully conneted layer, and the activation fuctions following this layer are ReLU. (3) A 128 fully connected layer, the activation functions are ReLU, and Softmax is used to get a probability distribution.

  \item [2)]

Network-B: a 4-layer Convolutional Neural Network from MiniONN\cite{liu2017oblivious}.(1) A 2-dimensional convolutional layer with 1 input channel, 16 output channels and a 5$ \times $5 filter. The activation functions following this layer are ReLU, followed by a 2$ \times $2 Maxpool. (2) A 2-dimensional convolutional layer with 16 input channels, 16 output channels and 5$ \times $5 filter.The activation functions following this layer are ReLU and a 2$ \times $2 Maxpool followed. (3) A 256$ \times $100 fully connected layer, the activation functions are ReLU.(4) A 100$\times $10 fully connected layer, the activation functions are ReLU, and Softmax is used to get a probability distribution.

  \item [3)]

Network-C: a 4-layer Convolutional Neural Metwork called LeNet network\cite{lecun1998gradient}. (1) A 2-dimensional convolutional layer with 1 input channel, 20 output channels and a 5$ \times $5 filter. The activation functions following this layer are ReLU, followed by a 2$ \times $2 Maxpool. (2) A 2-dimensional convolutional layer with 20 input channels, 50 output channels and another 5$ \times $5 filter. The activation functions following this layer are ReLU and a 2$ \times $2 Maxpool followed. (3) An 800$ \times $500 fully connected layer. The activation functions are ReLU. (4) A 500$ \times $10 fully connected layer, the activation functions are ReLU, and Softmax is used to get a probability distribution.

  \item [4)]	

Network-D: a 3-layer Convolutional Neural Metwork from Chameleon\cite{riazi2018chameleon}. (1)  a 2-dimensional convolutional layer with a 5$ \times $5 filter, stride of 2, and 5 output channels. The activation functions are ReLU. (2) A 980 fully conneted layer, and the activation fuctions following this layer are ReLU. (3) A 100 fully connected layer, the activation functions are ReLU, and Softmax is used to get a probability distribution. 

\end{itemize}

\subsection{The Effect of Approximating $e^x$ On Accuracy}

\begin{figure}
\centering
\includegraphics[width=0.8\textwidth]{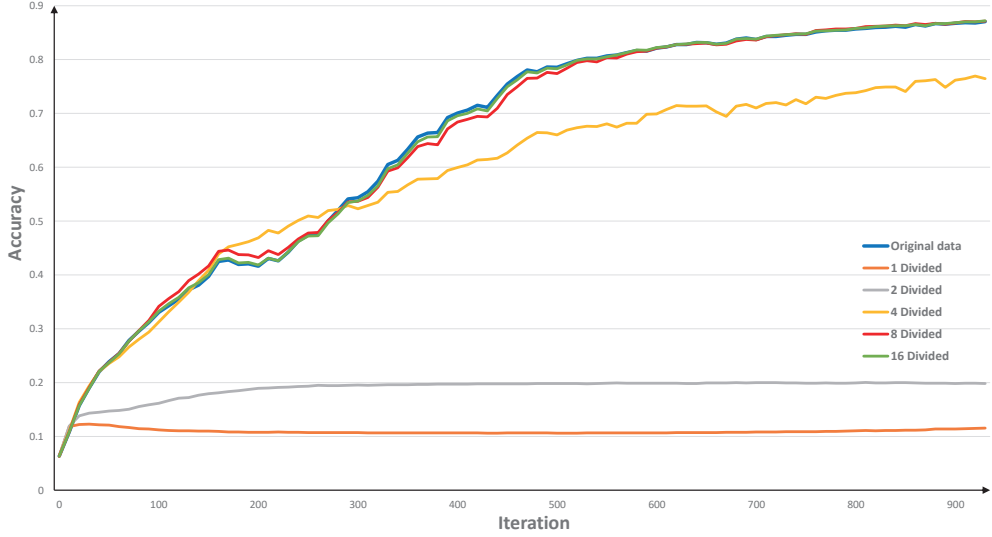}
\caption{The training accuracy when $e^x$ is divided into 1, 2, 4, 8, 16 segments evenly in $\left( { - 10,0} \right)$}
\label{fig3}
\end{figure}

Here we give the effect on the accuracy of model training under different segmentation degrees of $e^x$ in the first two epochs of Network-A. We give the change trend of model training accuracy when $e^x$ is divided into 1, 2, 4, 8, 16  segments evenly in the domain $\left( { - 10,0} \right)$ in \textbf{Figure \ref{fig3}}. It can be understood from the change of the image that with the increase in the detail degree of the segmentation in the specified domain, the accuracy of the secure training is gradually close to the accuracy of the plaintext training. When it is divided into 8 parts, it has been approximately fitted, and when it is divided into 16 parts, 
there is basically no difference in training accuracy with plaintext, which is the most fundamental reason why we can improve the accuracy. The subsequent experimental results are all obtained on the premise that $e^x$ is equally divided into 16 parts in $\left( { - 10,0} \right)$.

\subsection{Secure Training}

We train the model in the LAN and WAN setting based on the four types of networks using the MNIST dataset and obtained the corresponding accuracy, training time, and communication. The setting of the learning rate follows the previous work. The learning rate of Network-A is set to ${2^{ - 7}}$, and those of the other networks are set to ${2^{ - 5}}$. For the training time, we test 100 complete forward- and back-propagation training processes and get the average value to estimate the overall training time of 15 epochs (7020 iterations). And we actually run the training process of 15 epochs once, the actual training time is roughly the same as the estimated time. The method of testing communication is also similar.

\subsubsection{Accuracy Comparison}

\begin{table}
\centering
\caption{Comparison of the training communication of the three networks. All communication is reported in TB.}
\setlength{\tabcolsep}{8mm}{
\renewcommand{\arraystretch}{1.5}
\begin{tabular}{ccccc}
\hline\noalign{\smallskip}
Network & Epoch & SecureNN & Ours & Plaintext  \\
\noalign{\smallskip}\hline\noalign{\smallskip}
A & 15 & 93.40\% & 94.80\% & 94.83\% \\ \hline
\noalign{\smallskip}
B &  5 & 97.94\% & 98.39\% & 98.41\% \\
  & 10 & 98.05\% & 98.93\% & 98.96\% \\
  & 15 & 98.77\% & 99.09\% & 99.13\% \\ \hline
\noalign{\smallskip}
C &  5 & 98.15\% & 98.61\% & 98.60\% \\
  & 10 & 98.43\% & 98.94\% & 98.98\% \\
  & 15 & 99.15\% & 99.17\% & 99.17\% \\ \hline
\noalign{\smallskip}
D & 15 &         & 97.60\% & 97.64\% \\
\noalign{\smallskip}\hline
\end{tabular}}
\label{tab2}       
\end{table}

\begin{figure*}
\centering
\begin{minipage}[c]{0.5\textwidth}
\centering
\includegraphics[height=5cm,width=8.5cm]{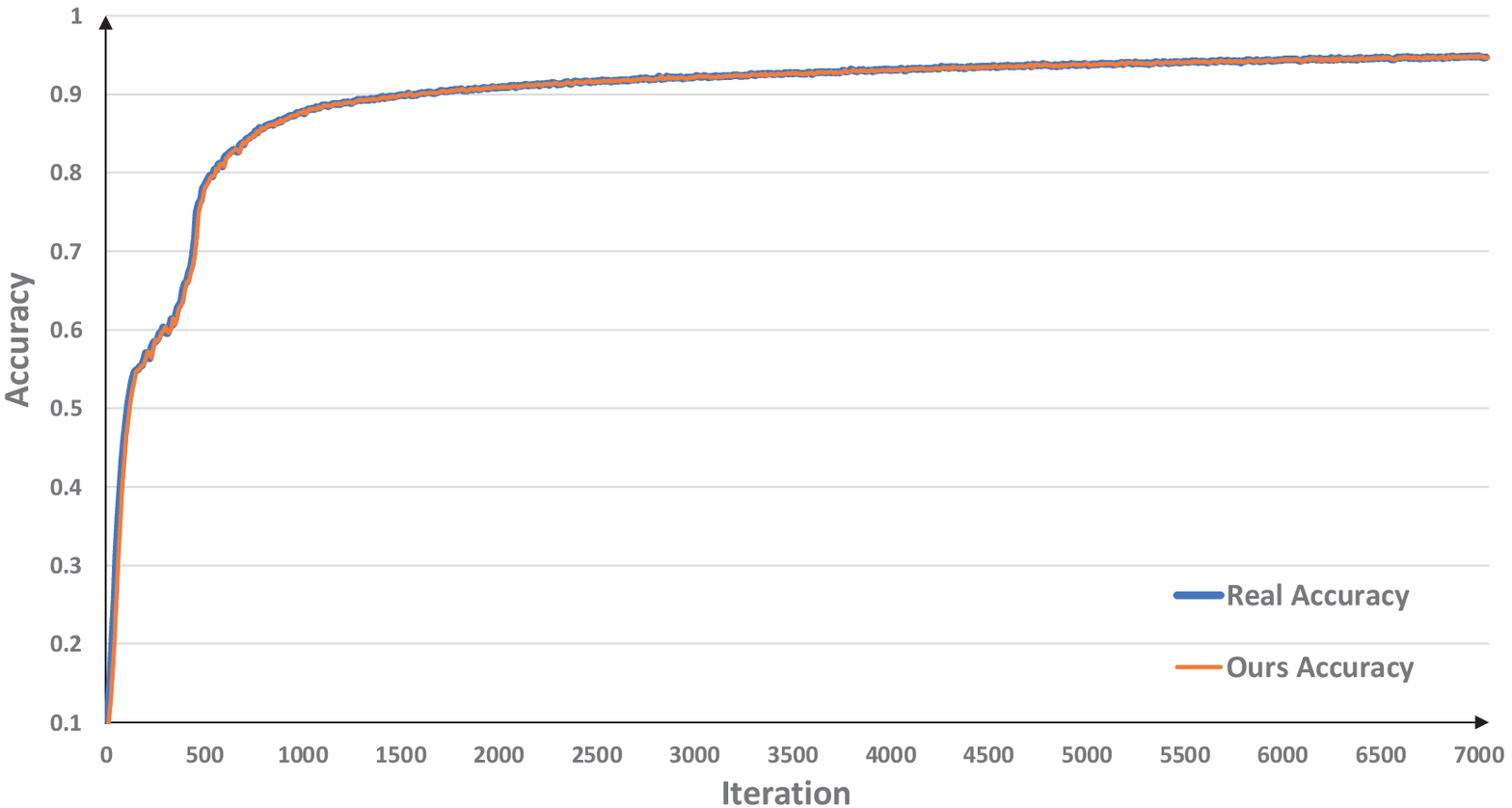}
\caption{The accuracy curves of secure training and \\plaintext training of Network-A}
\label{fig4}
\end{minipage}%
\begin{minipage}[c]{0.5\textwidth}
\centering
\includegraphics[height=5cm,width=8.5cm]{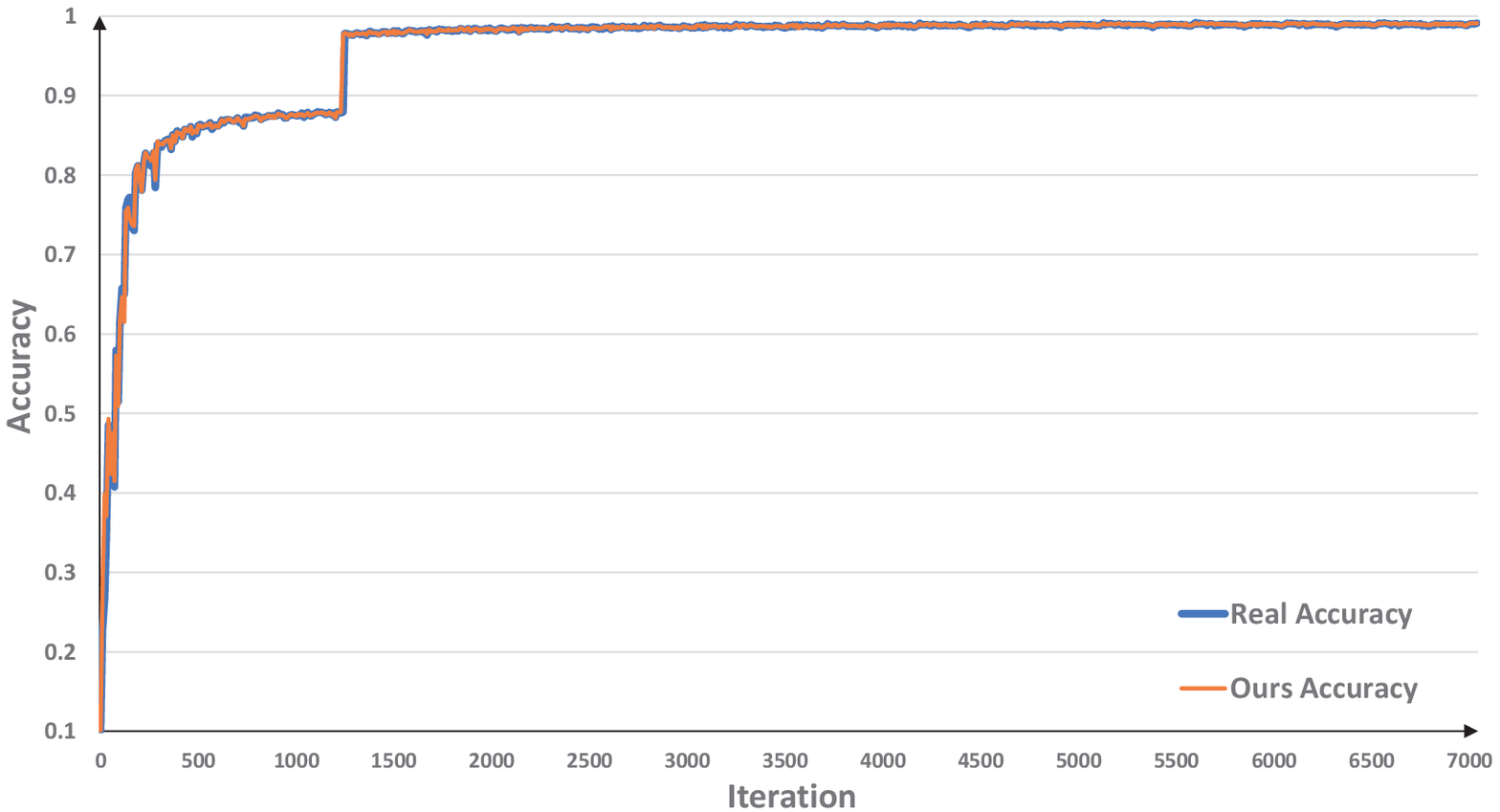}
\caption{The accuracy curves of secure training and \\plaintext training of Network-C}
\label{fig5}
\end{minipage}
\end{figure*}

We conduct 15 epoch trainings on four network structures and obtain the corresponding accuracy curves. Due to space limitations, we only list the plaintext training results and the corresponding secure training results of Network-A and Network-C as shown in\textbf{ Figure \ref{fig4}} and\textbf{ Figure \ref{fig5}}, we can see that the trend of our secure training result curve and the plaintext curve are approximatively the same, and there is no obvious fluctuation. In the \textbf{TABLE \ref{tab2}}, we compare our results with SecureNN and plaintext training. From the data, compared with the plaintext training, the difference between the accuracies is only approximately $\pm0.05\%$.  It can be seen that since we no longer use ReLU to approximate the Softmax function, but use the exponential function approximated by the piecewise function, and the new division method to achieve the Softmax function based entirely on Secret Sharing, the accuracy of our trained model is further improved compared with the previous work. For Network-A, in SecureML and SecureNN, the accuracy obtained in the first 15 epochs only reaches 93.4\%, whereas the accuracy we obtained reaches 94.8\%, which is similar to the accuracy of plaintext training. For Network-B, SecureNN only reached 98.77\%, but we reached 99.09\%. Network-D, which SecureNN does not train, we train it and obtain an accuracy similar to that of plaintext training.

\subsubsection{Efficiency Comparison}

\begin{figure*}
\centering
\begin{minipage}[c]{0.33\textwidth}
\centering
\includegraphics[height=4cm,width=5cm]{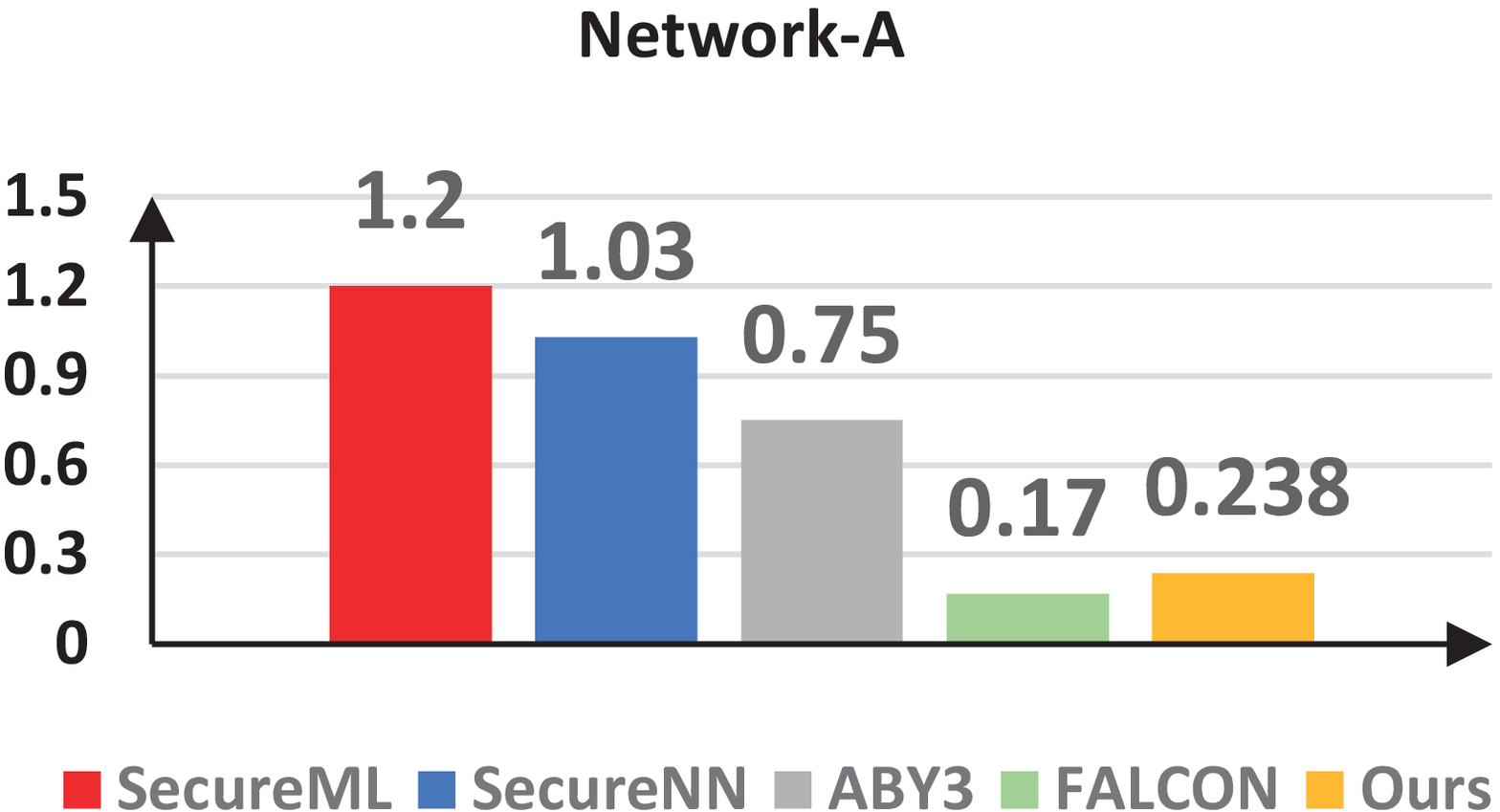}
\label{fig6}
\end{minipage}%
\begin{minipage}[c]{0.33\textwidth}
\centering
\includegraphics[height=4cm,width=5cm]{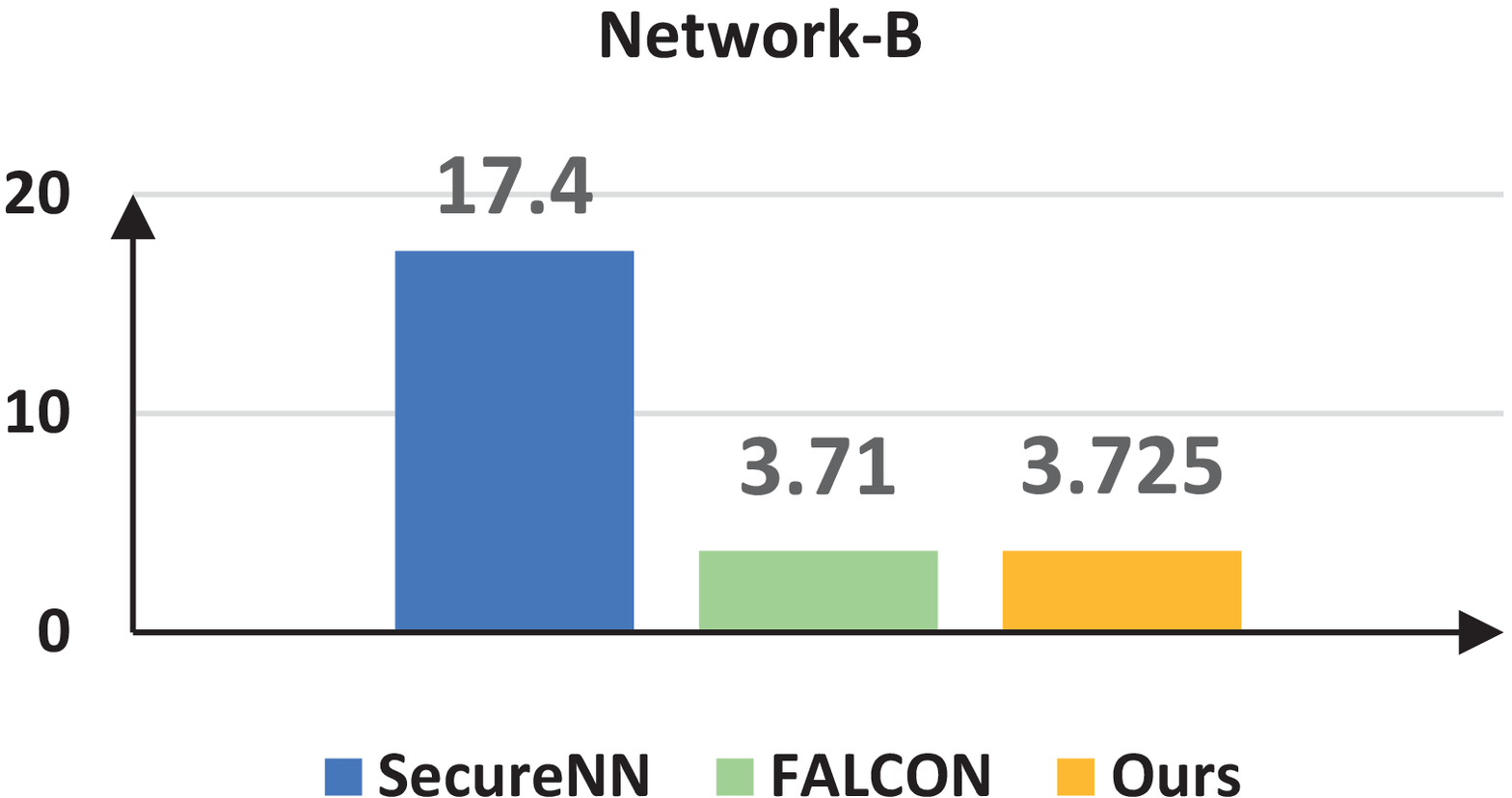}
\end{minipage}%
\begin{minipage}[c]{0.33\textwidth}
\centering
\includegraphics[height=4cm,width=5cm]{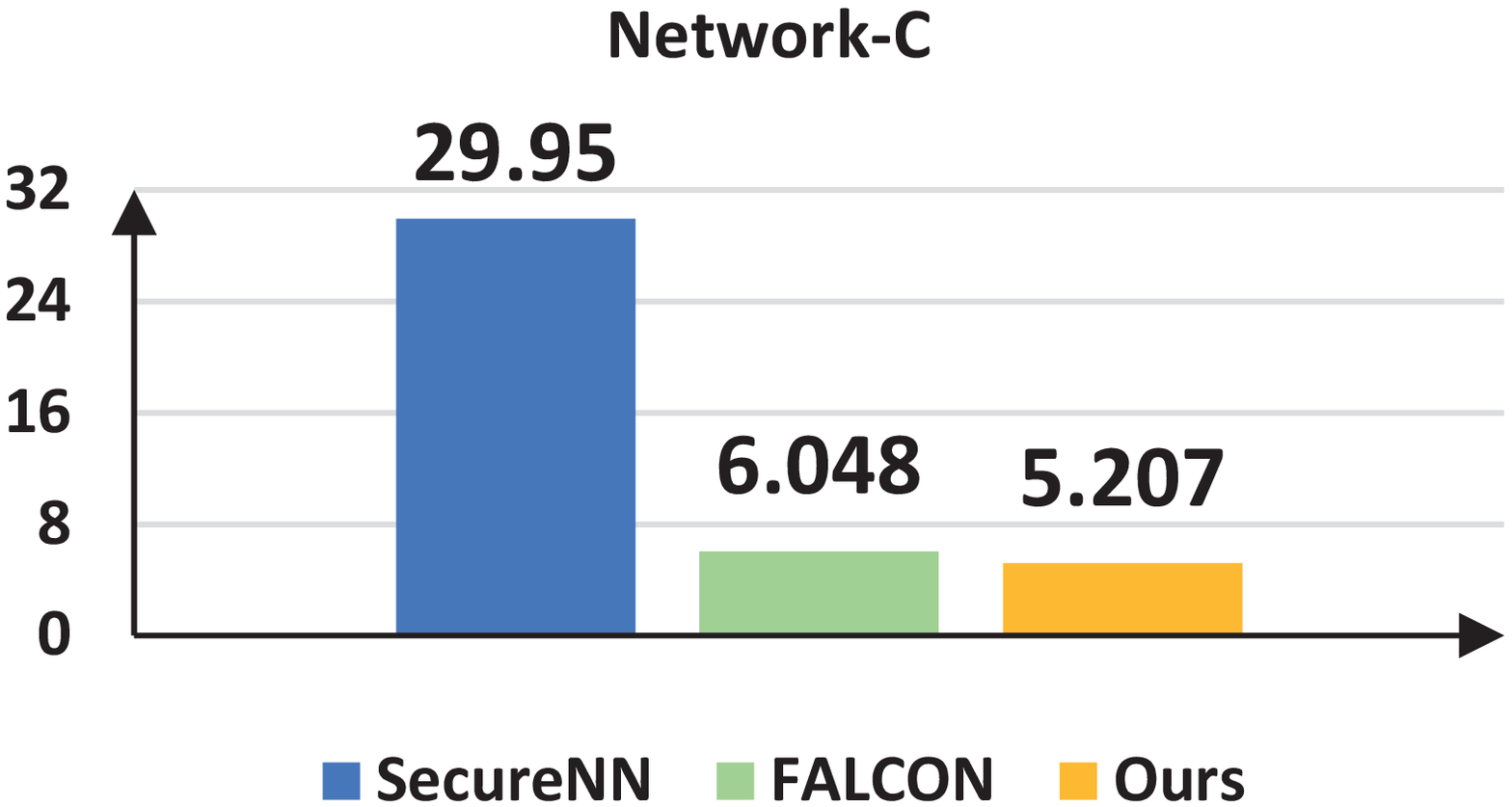}
\end{minipage}
\caption{Time needed for different frameworks in Network-A,B,C in the LAN setting. All runtimes are reported in hours.}
\label{fig6}
\end{figure*}

\begin{figure*}
\centering
\begin{minipage}[c]{0.33\textwidth}
\centering
\includegraphics[height=4cm,width=5cm]{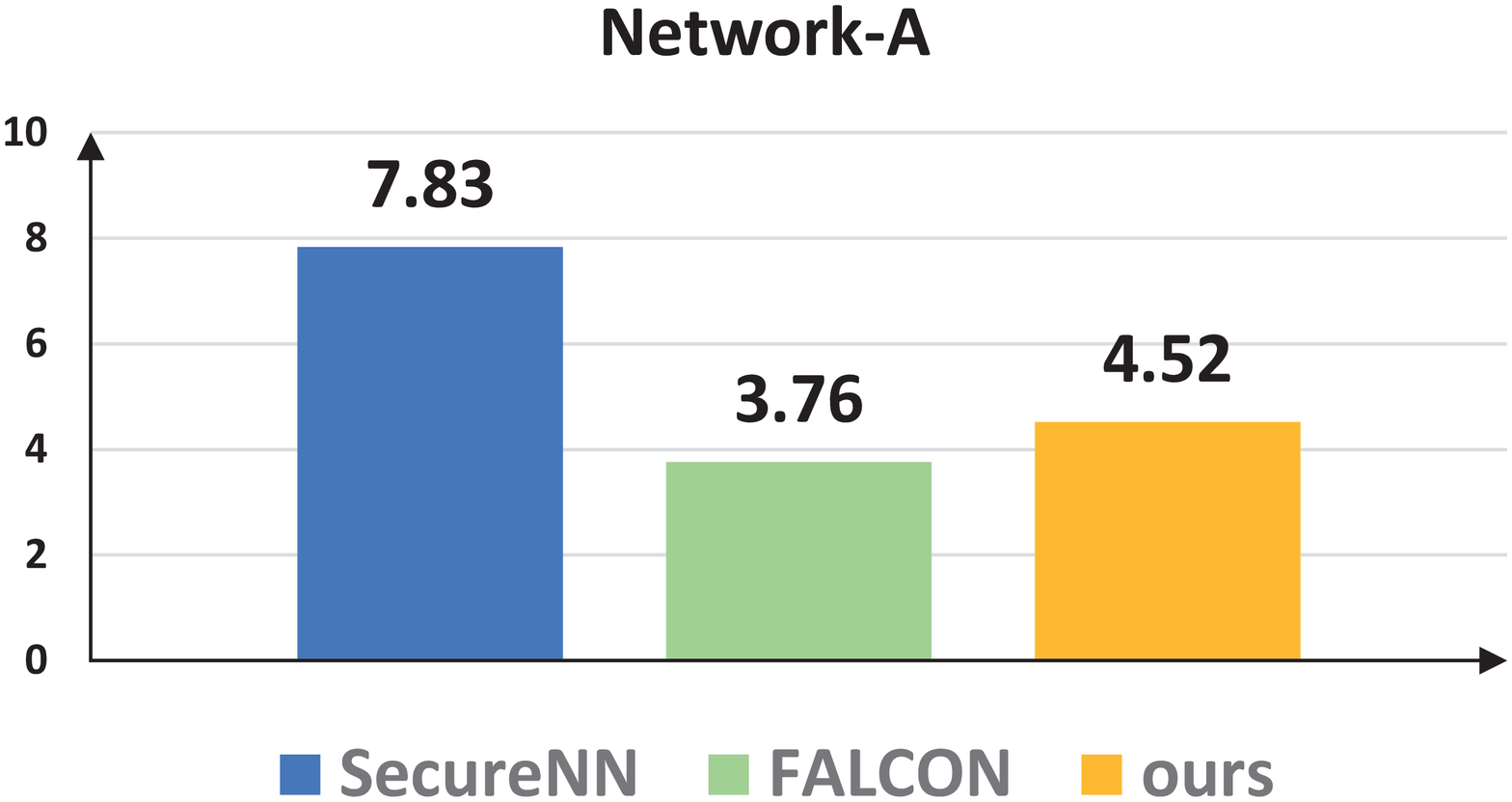}
\end{minipage}%
\begin{minipage}[c]{0.33\textwidth}
\centering
\includegraphics[height=4cm,width=5cm]{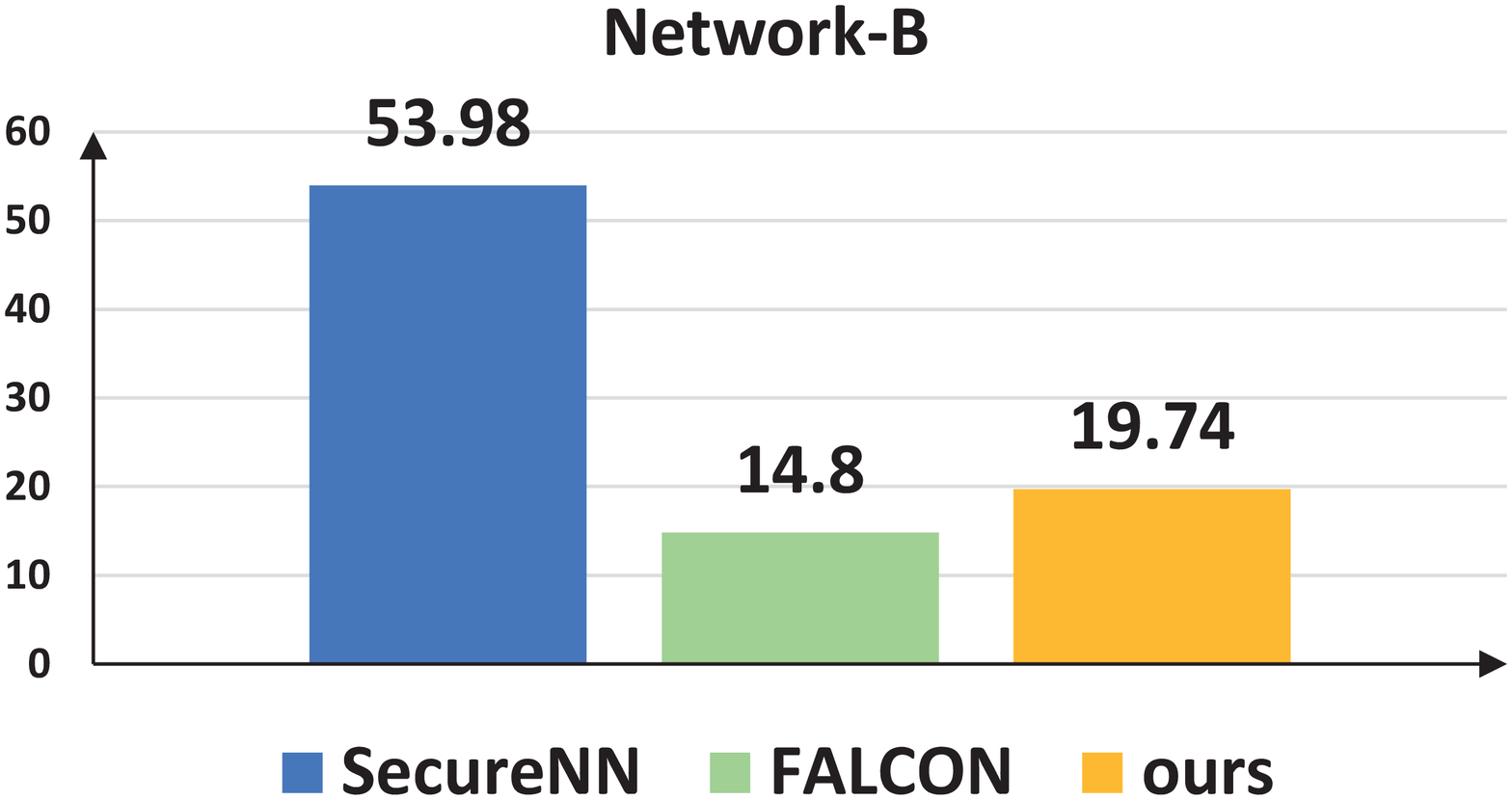}
\end{minipage}%
\begin{minipage}[c]{0.33\textwidth}
\centering
\includegraphics[height=4cm,width=5cm]{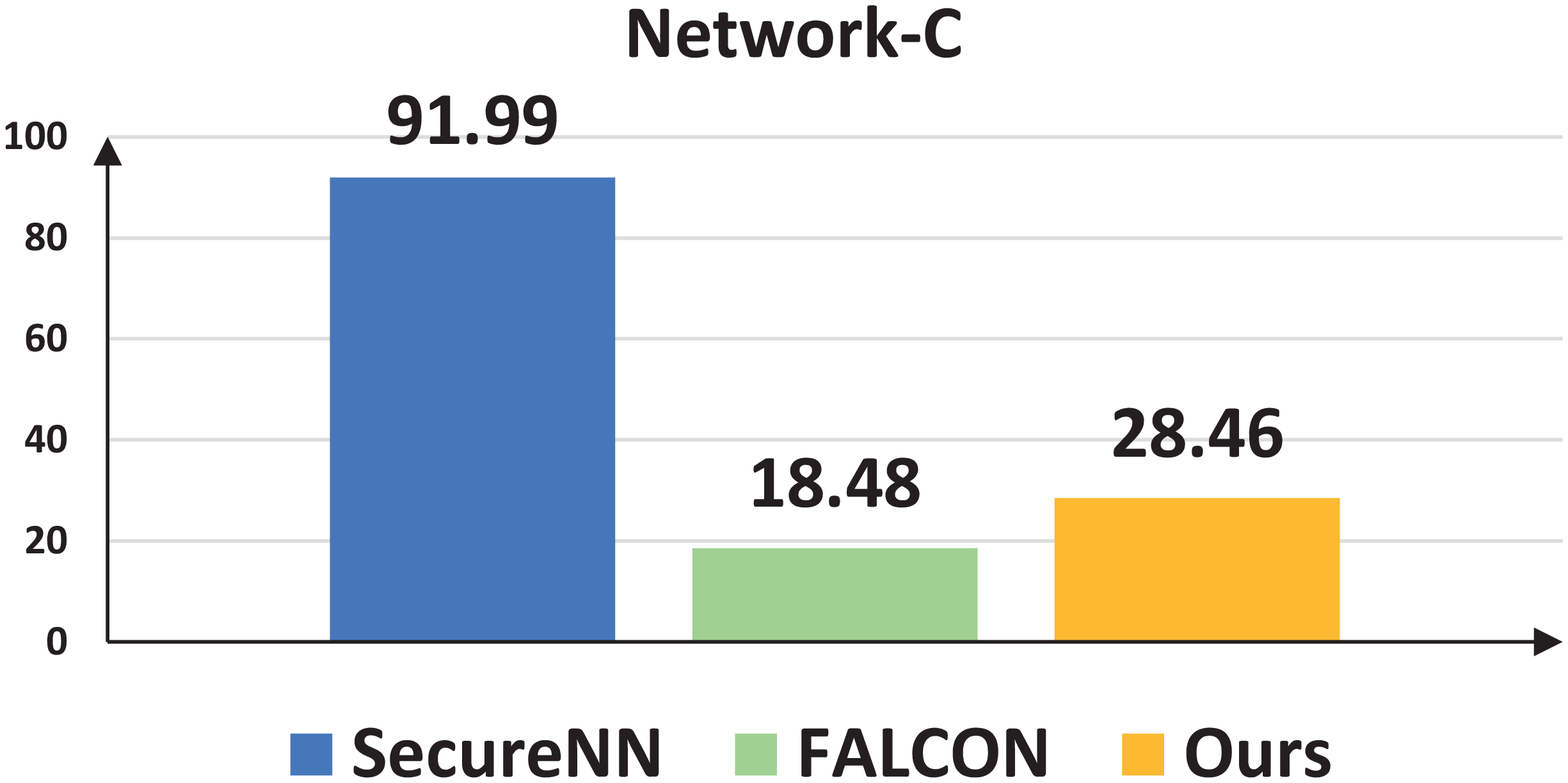}
\end{minipage}
\caption{Time needed for different frameworks in Network-A,B,C in the WAN setting. All runtimes are reported in hours.}
\label{fig7}
\end{figure*}

\begin{table}[]
\centering
\caption{Comparison of the training communication of the three networks. All communication is reported in TB.}
\setlength{\tabcolsep}{5mm}{
\renewcommand{\arraystretch}{1.5}
\begin{tabular}{|c|c|c|c|}
\hline
\multirow{2}{*}{Framework} & \multicolumn{3}{c|}{Communication} \\ \cline{2-4} 
                           & Network-A   & Network-B   & Network-C  \\ \hline
ABY3                       & 0.031       &             &            \\ \hline
SecureNN                   & 0.11        & 30.6        &            \\ \hline
FALCON                     & 0.016       & 0.54        & 0.81       \\ \hline
Ours                       & 0.047       & 1.157       & 1.642      \\ \hline
\end{tabular}}
\label{tab3}
\end{table}

In this part, we give a comparison of the secure training time and total communication of Network-A, B, C in the LAN and WAN setting with the previous works as shown in \textbf{Figure \ref{fig6}}, \textbf{Figure \ref{fig7}} and \textbf{TABLE \ref{tab3}}. What we need to declare here is that for the framework that separates the offline and online phases, such as SecureML and our work, we only take the online phase for comparison. In comparison with other works, we believe that the time of the online phase and the offline phase should not be simply added, because the data in the offline phase and the online phase are not correlated. In practical applications, under the two-server model, the triplets that need to be pre-computed or the mask r required for comparison may have been generated in advance, or only a part needs be generated before the training starts, and then the offline phase and the online phase can be completed in parallel as long as the data generated in the offline phase meets the requirements of the online phase. Therefore, the total time required may only be a small part more than the online phase. For the convenience of theoretical comparison, only the online phase is compared here. In the LAN setting, for Network-A, our runtime is 5$ \times $ faster than SecureML, 4.32 $ \times $ faster than SecureNN, and 3.15$ \times $ faster than ABY$^3$. For Network-B and Network-C, our work is also 4.67-5.75$ \times $ faster than SecureNN. And our work is very close to the current best 3PC implementation, FALCON. In the WAN setting, we also get approximate results. In terms of communication, our work is significantly reduced compared to SecureNN. For Network-A, it is about 42\% of its, and for Network-B it is even more obvious, which is about 3.8\%. It is very close to FALCON.

\subsection{Secure Inference}

\begin{table}[]
\centering
\caption{Comparison of time and communication in the three networks with previous works. All runtimes are reported in seconds and communication is MB.}
\setlength{\tabcolsep}{5mm}{
\renewcommand{\arraystretch}{1.5}
\begin{tabular}{|c|c|c|c|c|c|c|}
\hline
\multirow{2}{*}{Framework} & \multicolumn{2}{c|}{Network-A} & \multicolumn{2}{c|}{Network-B} & \multicolumn{2}{c|}{Network-D} \\ \cline{2-7} 
                           & Time           & Comm.         & Time           & Comm.         & Time           & Comm.         \\ \hline
SecureML                   & 0.18           & -             & -              & -             & -              & -             \\ \hline
DeepSecure                 & -              & -             & 9.67           & 791           & -              & -             \\ \hline
EzPC                       & 0.4            & 76            & 5.1            & 501           & 0.6            & 70            \\ \hline
Gazelle                    & 0.03           & 0.5           & 0.33           & 70            & 0.05           & 2.1           \\ \hline
MiniONN                    & 0.14           & 12            & 5.74           & 636.6         & 0.4            & 44            \\ \hline
XONN                       & 0.13           & 4.29          & 0.15           & 32.1          & 0.16           & 38.3          \\ \hline
Ours                       & 0.005          & 0.94          & 0.027          & 1.08          & 0.011          & 0.88          \\ \hline
\end{tabular}}
\label{tab4}
\end{table}

In this part, we give a comparison of the time and communication required for secure inference of a single data in Network-A, B, and D in the LAN setting with the previous works based on 2PC as shown in \textbf{TABLE \ref{tab4}}. Same as the time comparison of secure training, we also compare the online time. The reason for only comparing online time is more obvious in secure inference, because the inference emphasizes more on response time. After the model is placed on the two servers, in order to improve the response time, both parties will definitely try to generate as many triplets and mask $r$ as possible when the client does not use and do not occupy the online time, so we only compare the online time. Through the time comparison in the table, our work should be the current best job on 2PC. For Network-A, it is 6-80 $ \times $ faster than other works, for Network-D, it is 4-54$ \times $ faster than others, and for Network-B, it is even more obvious, which is 5-358$ \times $ faster than others.

\section{Conclusion}

In this work, we design a faster and more accurate neural network training and inference framework based on 2PC. We build a new preprocessing protocol for mask generation, support and realize secret sharing comparison on 2PC, propose a new method to further reduce the communication rounds, and construct some building blocks based on the comparison protocol, such as division and exponential. We obtain a higher degree of approximation Softmax function and then realize the neural network training and inference process entirely based on the secret sharing method. The experimental results show that our work is superior to most current frameworks in terms of accuracy and time efficiency. In the four network structures most used in previous works, the accuracies we obtained are all higher than those of other works and are closer to the results of plaintext training. In terms of time efficiency, our work has significantly improved compared with others both on secure training and secure inference.

\section*{Acknowledgment}

This work is supported by the National Natural Science Foundation of China under Grant No.62072208 and No.61772229.

\bibliographystyle{cas-model2-names}

\bibliography{ref}

\appendix

\section{Functionality Descriptions}

\begin{figure*}[h]
\centering
\begin{minipage}{0.5\textwidth}
\fbox{%
\parbox{0.95\textwidth}{%
\begin{center}
    ${{\cal F}_{GenerateR}}$
\flushleft{\quad \textbf{Input:} \quad The functionality receives no inputs}
\flushleft{\quad \textbf{Output:}  \quad  Compute the 
following:}
\flushleft{\qquad 1.\quad Choose random number $r$, and get the \\ 
           \qquad  vector  of each bit of $r$,$\left\{ {r\left[ j \right]} \right\}$. }
\flushleft{\qquad 2.\quad Generate random shares of $r$, random \\
           \qquad  shares of  $\left\{ {r\left[ j \right]} \right\}$ over ${Z_p}$, and send back to  \\ \qquad the parties.}  
\end{center}
  }%
}
\caption{Ideal functionality for ${\Pi_{GenerateR}}$}
\label{fig8}
\end{minipage}%
\begin{minipage}{0.5\textwidth}
\fbox{%
\parbox{0.95\textwidth}{%
\begin{center}
             ${{\cal F}_{GetWrapped}}$
\flushleft{\quad \textbf{Input:} \quad The functionality receives inputs ${\left\langle r \right\rangle ^L}$}
\flushleft{\quad \textbf{Output:}  \quad  Compute the 
following:}
\flushleft{\qquad 1.\quad Compute $b = wrap\left( {{r_0},{r_1},L} \right)$}
\flushleft{\qquad 2.\quad Generate random shares of $b$, and send \\
                    \quad \quad back to the parties.\\
                    \quad }  
\end{center}
  }%
}
\caption{Ideal functionality for ${\Pi_{GetWrapped}}$}
\label{fig9}
\end{minipage}%
\end{figure*}

\begin{figure*}[h]
\centering
\begin{minipage}{0.5\textwidth}
\fbox{%
\parbox{0.95\textwidth}{%
\begin{center}
             ${{\cal F}_{CheckZero}}$
\flushleft{ \textbf{Input:} The functionality receives inputs $\left\{ {{{\left\langle {c\left[ j \right]} \right\rangle }^p}} \right\}$}
\flushleft{ \textbf{Output:}  Compute the 
following:}
\flushleft{ \quad 1.\quad Reconstruct $\left\{ {c{{\left[ j \right]}^p}} \right\}$, get $\eta$ if there is 0 in it.}
\flushleft{ \quad 2.\quad Generate random shares of $\eta$, and send back \\
            \quad  to the parties.}  
\end{center}
  }%
}
\caption{Ideal functionality for ${\Pi_{CheckZero}}$}
\label{fig10}
\end{minipage}%
\begin{minipage}{0.5\textwidth}
\fbox{%
\parbox{0.95\textwidth}{%
\begin{center}
             ${{\cal F}_{Compare}}$
\flushleft{\quad \textbf{Input:} \quad The functionality receives inputs ${\left\langle a \right\rangle ^L}$}
\flushleft{\quad \textbf{Output:}  \quad  Compute the 
following:}
\flushleft{ \qquad 1.\quad Reconstruct $a$ and get  $\upsilon$ if $a > 0$. }
\flushleft{ \qquad 2.\quad Generate random shares of $\upsilon$, and send \\
            \qquad back  to the parties.}  
\end{center}
  }%
}
\caption{Ideal functionality for ${\Pi_{Compare}}$}
\label{fig11}
\end{minipage}%
\end{figure*}

\begin{figure*}[h]
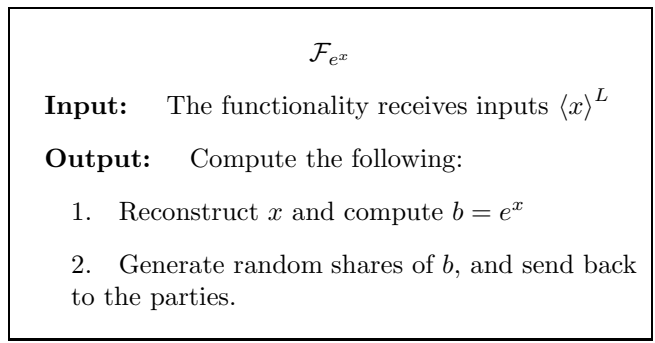

\centering
\begin{minipage}{0.5\textwidth}
\fbox{%
\parbox{\textwidth}{%
\begin{center}
             ${{\cal F}_{e^x}}$
\flushleft{\quad \textbf{Input:} \quad The functionality receives inputs ${\left\langle x \right\rangle ^L}$}
\flushleft{\quad \textbf{Output:}  \quad  Compute the 
following:}
\flushleft{\qquad 1.\quad Reconstruct $x$ and compute $b = {e^x}$}
\flushleft{\qquad 2.\quad Generate random shares of $b$, and send back \\
           \qquad to the parties.}  
\end{center}
  }%
}
\caption{Ideal functionality for ${\Pi_{e^x}}$}
\label{fig12}
\end{minipage}%
\begin{minipage}{0.5\textwidth}
\end{minipage}%
\end{figure*}

\end{document}